\documentclass[11pt]{article}

\usepackage[a4paper,margin=1in]{geometry}
\usepackage{amsmath,amssymb,amsthm,mathtools}
\usepackage[hidelinks]{hyperref}
\usepackage{microtype}
\usepackage{enumitem}
\usepackage{booktabs}
\usepackage{graphicx}
\usepackage{xcolor}

\theoremstyle{plain}
\newtheorem{theorem}{Theorem}
\newtheorem{proposition}[theorem]{Proposition}
\newtheorem{lemma}[theorem]{Lemma}
\newtheorem{corollary}[theorem]{Corollary}

\theoremstyle{definition}
\newtheorem{definition}[theorem]{Definition}

\newtheorem{problem}[theorem]{Open Problem}

\theoremstyle{remark}
\newtheorem{remark}[theorem]{Remark}

\newcommand{\eps}{\varepsilon}

\newcommand{\bigO}{\mathcal{O}}

\newcommand{\synt}[1]{\mathrm{Synt}(#1)}
\newcommand{\St}{\mathsf{SLT}}
\newcommand{\Lt}{\mathsf{LT}}
\newcommand{\Ap}{\mathsf{Ap}}
\newcommand{\ZG}{\mathbf{ZG}}

\newcommand{\edit}{\mathsf{es}}
\newcommand{\size}{\mathsf{size}}
\newcommand{\loc}{\mathsf{loc}}
\newcommand{\rev}{\mathsf{rv}}
\newcommand{\bedit}{\mathsf{bes}}

\newcommand{\rootnode}{\mathsf{root}}
\newcommand{\dist}{\mathrm{dist}}
\newcommand{\Ham}{\mathrm{Ham}}
\newcommand{\avg}{\mathsf{mean}}

\title{\bfseries Exact Local Annotations for Regular Languages}

\author{Faruk Alpay\textsuperscript{*}\quad Bar{\i}\c{s} Ba\c{s}aran\\[2pt]
\small Department of Computer Engineering, Bah\c{c}e\c{s}ehir University\\
\small Istanbul, Turkey\\
\small \texttt{\{faruk.alpay, baris.basaran\}@bahcesehir.edu.tr}\\[2pt]
\small \textsuperscript{*}Corresponding author: \texttt{alpay@lightcap.ai}}

\date{}

\begin{document}
\maketitle

\begin{abstract}
A regular language can be recognised by a finite monoid, but a locally
checkable explanation of that recognition may have its own geometry.  On the
standard local-certification/proof-labeling model, this paper studies the
edit-stability of exact bounded--arity annotations over a fixed finite label
alphabet for regular word languages under one-symbol substitutions.  The cost
of an edit is not trusted update time; it is the number of annotation cells that
a canonical locally accepted representation must change, the corresponding bit
movement under a fixed label encoding, and the number of local constraints that
must be revalidated.  Equivalently, for each length we ask for a
low--Lipschitz section of a local acceptance relation over the substitution
graph.  The universal theorem is an upper bound: for every morphism
$\eta:\Sigma^{*}\to M$ recognising a regular language $L$, the balanced product
annotation over $M$ gives constant locality, linear size, logarithmic edit
stability, logarithmic revalidation, and constant access to the membership
value.  The matching lower argument proved here is not universal.  It applies
to the committed geometry in which a bounded fan--in product decomposition
exposes an edit-active nontrivial group quotient as projected ordered-product
labels; in that submodel, one substitution changes every quotient label on an
ancestor path.  The gap between these two statements is the main open problem:
whether arbitrary exact bounded--arity annotations can avoid this product-path
transport and achieve constant stability beyond strict locality.  We also show
that annotation-free bounded-window recognition is exactly strict locality, and
we formulate the remaining boundary as a finite obstruction problem.  The
ancillary library contains a small Lean~4 proof kernel, CP-SAT feasibility
certificates with unsatisfiable cores, CUDA deterministic-template searches for
parity and modulo $3$, a nondeterministic path-relation GPU search closer to the
local-certificate definition, and group-run audits including a non-abelian $S_3$
case.  A final CUDA experiment extends the finite-certificate method to
context-free interval charts by measuring nonterminal-set and split-witness movement
for three small grammars.
\end{abstract}

\section{Introduction}
\label{sec:intro}

Regular languages are among the few computational objects for which several resource
notions admit algebraic classifications.  Static recognition factors through the
syntactic monoid, while strict and local testability describe membership by bounded
windows of the word.  This paper studies a different finite-word resource: an exact
local annotation that explains a regular-language membership value and remains stable
under one-symbol substitutions.  The annotation is auxiliary information, but it is not
trusted state.  After the edit, its correctness must still be forced by bounded-arity
local constraints over the input and the annotation.

The checker model itself is not new.  Stripped of the word-specific notation,
Definition~\ref{def:scheme} is a one-dimensional locally checkable proof: a prover
assigns finite labels, a local verifier checks bounded scopes and a root predicate, and
membership means that some accepting labelling exists.  Korman, Kutten, and Peleg
introduced proof-labeling schemes for distributed verification \cite{KKP10}.  Göös and
Suomela developed the locally checkable proof viewpoint and its proof-size hierarchy
\cite{GS16}.  Feuilloley and Fraigniaud study error-sensitive proof-labeling schemes,
where the number of detecting vertices is related to the distance from legality
\cite{FF17}.  Our contribution is not this certification framework.  It is the
edit-stability parameter on it: the Lipschitz constant of a chosen canonical accepting
labelling as the input word moves along one substitution edge.

Error sensitivity and edit stability measure different maps.  Error sensitivity fixes
one input instance and asks how many local checks reject an annotation far from every
legal annotation.  Edit stability compares chosen legal representatives of two
different input instances.  Neither definition gives a reduction to the other without
an additional theorem relating distance-to-legality inside one fibre to movement
between fibres.  This paper proves no such irreducibility theorem; establishing or
refuting one on paths is part of the model-comparison problem.

The algebraic source for bounded-window phenomena is older.  Brzozowski and Simon
characterised locally testable events \cite{BrzozowskiSimon73}.  Zalcstein gave an
independent early treatment of locally testable languages \cite{Zalcstein72}.  These
theories do not, however, answer the following question: when a membership answer is
accompanied by an externally checkable local explanation, how much of that explanation
is forced to change after a single edit?

This paper formalises that question for regular word languages.  The checker is exact,
deterministic, and local.  It receives the input word and a finite annotation; it accepts
only when all bounded-arity local consistency rules and one bounded root rule accept.
The annotation may be produced by any algorithm, but its correctness must be locally
forced from the input and the annotation itself.  After one substitution, the resource of
interest is the Hamming distance between canonical annotations, together with the number
of local constraints whose truth can change.  This differs from
trusted dynamic update time: a dynamic data structure may store a global bit that is
cheap to update but impossible for an external bounded-arity checker to force.

The distinction is already visible for parity.  A trusted structure toggles one bit per
edit.  A left-to-right prefix annotation is locally checkable but can change linearly
many cells.  A balanced product annotation changes only an ancestor path.  The point is
not that parity is difficult as a language; it is that association geometry exposes a
new stability parameter.  The relevant object is a locally checkable representation of a
monoid product, and the edit cost is the amount of that representation whose value is
forced to move.

As a data structure, the balanced product construction is the standard segment tree or
parallel-prefix association of an associative product.  The novelty claimed here is not
that construction itself, nor its $d$-ary height tradeoff, but its role as a locally
checkable certificate whose canonical representative has a measurable Lipschitz
response to input edits.

This paper is primarily a structural paper, but the finite search component also
needs reproducible certificates.  The ancillary material is organised as a small library rather
than as a single data dump.  Lean files expose the checked base case for the parity
product argument.  CUDA files expose both the product-atlas scans and a separate
deterministic path-template search, which is closer to the finite feasibility
formulation in Section~\ref{sec:finite}.  The certificates store aggregate obstruction
profiles and digests, because those profiles are the mathematical objects one wants to
inspect; raw traces would be much larger and less useful.

\paragraph{Main contributions.}
The paper makes the following contributions.
\begin{enumerate}[leftmargin=1.7em,itemsep=3pt]
  \item \textbf{An edit-stability parameter for local certification.}  Starting from
  the standard locally checkable proof/proof-labeling viewpoint, we define canonical
  annotations, cell and bit edit stability, and revalidation locality for regular
  words.  The new parameter can be read as a low-Lipschitz section problem over the
  substitution graph: the section chooses one locally accepted annotation for each
  word while keeping adjacent words close.  The bit metric records information hidden
  by enlarging the label alphabet.

  \item \textbf{Universal logarithmic stability.}  Every regular language recognised by
  a finite morphism $\eta:\Sigma^{*}\to M$ admits a balanced product annotation of
  constant locality, linear size, $\bigO(\log n)$ edit stability, $\bigO(\log n)$
  incremental revalidation, and constant-time access to the membership value.

  \item \textbf{Algebraic lower bounds inside projected product geometries.}  The
  construction extends to bounded fan-in $d$, giving the expected arity-height tradeoff.
  Conversely, if a fan-in-$d$ product decomposition exposes an edit-active nontrivial
  group quotient at its leaves and internal product labels, then one substitution
  forces $\Omega(\log_d n)$ changed quotient labels.  This generalises the parity
  product-tree lower bound, without claiming a lower bound for arbitrary
  auxiliary annotations.

  \item \textbf{The annotation-free boundary.}  Bounded-window annotation-free
  recognition is exactly strict locality.  Hence strictly local languages have zero
  annotation size and constant revalidation, whereas even a one-bit group language has
  no annotation-free bounded-window recogniser.

  \item \textbf{Closure bookkeeping and aligned edits.}  We separate the original
  one-sided existential model from a two-sided total decision variant.  In the
  two-sided variant we prove closure under Boolean operations, fixed word quotients,
  finite products of decision schemes, and uniform inverse morphisms in the
  substitution model.  In the aligned edit-trace model the same proof handles
  arbitrary fixed morphisms.  We also give an aligned insertion/deletion metric and
  show that balanced product annotations retain logarithmic movement in that model.

  \item \textbf{A finite obstruction method.}  For fixed arity, label alphabet,
  stability budget, scopes, and maximum length, the existence of a stable exact local
  annotation family is a finite Boolean/integer feasibility problem.  This gives a
  precise way to falsify proposed constant-stability templates at finite scale and to
  extract obstruction patterns before proving an asymptotic classification.

  \item \textbf{A structured ancillary library.}  The ancillary files include a Lean~4
  development, checked without external libraries, formalising the parity product
  kernel used in the group lower-bound argument.  They also include CUDA pipelines
  and certificate schemas for finite product-annotation atlases, deterministic
  path-local template searches, and nondeterministic path-relation searches that
  materialise 32\,GiB tables on an A100, plus a context-free interval-chart pipeline
  that materialises nonterminal and split-witness movement.
\end{enumerate}

The paper is explicit about the boundary it does not settle.  The logarithmic upper
bound is a theorem for all regular languages.  The lower bound is a theorem for
quotient-exposing product decompositions, not for every possible local annotation
scheme.  The full classification of regular languages with constant edit-stable exact
local annotations remains open.  The contribution is to isolate this boundary, supply
the universal upper bound, identify an algebraic source of logarithmic lower bounds
inside product geometries, and turn the remaining boundary into a finite obstruction
problem.

Trusted dynamic maintenance of regular languages is a related but different resource
model: it permits auxiliary state maintained by update formulas, whereas the present
model requires the auxiliary information to be locally checkable after each edit.  The
relevant dynamic-maintenance classifications are used only for comparison
\cite{PatnaikImmerman94,GMS12,Tschirbs23,BTVZ26}.

\paragraph{Relation to GPU automata work.}
There is a substantial systems literature on running automata and regular-expression
engines on GPUs.  Cascarano, Rolando, Risso, and Sisto introduced iNFAnt as an NFA
pattern-matching engine for GPGPU devices \cite{Cascarano10}.  Yu and Becchi compared
automata representations for regular-expression matching on GPUs \cite{YuBecchi13}.
Mytkowicz, Musuvathi, and Schulte studied data-parallel finite-state machines as a
general execution model rather than a single regex application \cite{Mytkowicz14}.  Liu,
Pai, and Jog analysed why NFAs execute poorly on GPUs and how scheduling choices can
make them faster \cite{LiuPaiJog20}.  Ge, Zhang, and Liu later proposed ngAP for
large-scale non-blocking automata processing on GPUs \cite{GeZhangLiu24}.  Valizadeh
and Berger used GPUs for search-based regular-expression inference \cite{ValizadehBerger23}.
HybridSA combines GPU execution with bit parallelism for multi-pattern regex matching
\cite{HybridSA24}.  Those papers study throughput, memory layout, scheduling, and
synthesis performance.  The present use of the GPU is different: the automata are
small, but the finite obstruction spaces are large and naturally data-parallel.  One
ancillary pipeline materialises canonical product annotations and measures their
Lipschitz profiles.  A second pipeline enumerates deterministic path-local templates,
materialises their length-$n$ final-state table in VRAM, and filters the templates that
are exact for parity up to that length.  The theoretical claims in the paper do not
depend on these runs.  The runs supply reproducible finite objects for testing proposed
lower-bound mechanisms and for refining the open boundary problem.

\section{Preliminaries}
\label{sec:prelim}

\paragraph{Words and substitutions.}
The alphabet $\Sigma$ is finite.  The free monoid is $\Sigma^{*}$, with empty word
$\eps$.  For $w=w_1\cdots w_n\in\Sigma^n$, $i\in[n]$ and $a\in\Sigma$, write
$\mathrm{sub}_{i}^{a}(w)$ for the word obtained by replacing $w_i$ by $a$.  The length
is fixed under this edit operation, so annotation cells can be compared before and
after the edit.

\paragraph{Recognising morphisms.}
A recogniser for $L$ is a triple $(M,\eta,F)$ where $M$ is a finite monoid,
$\eta:\Sigma^{*}\to M$ is a morphism, $F\subseteq M$, and
$L=\eta^{-1}(F)$.  The least recogniser is the syntactic morphism onto the syntactic
monoid $\synt{L}$, obtained from the congruence
\[
  u\equiv_L v
  \quad\Longleftrightarrow\quad
  \forall x,y\in\Sigma^{*},\; xuy\in L \Longleftrightarrow xvy\in L .
\]
The constructions below work for any recogniser; using $\synt{L}$ minimises the monoid
constant.

\begin{lemma}[Quotients commute with product annotations]
\label{lem:quotientannotation}
Let $\rho:M\to N$ be a monoid quotient and let $\eta_N=\rho\eta_M$.  The balanced
product annotation over $M$ projects cellwise under $\rho$ to the balanced product
annotation over $N$.
\end{lemma}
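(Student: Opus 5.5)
The argument is structural induction on the balanced tree $T_n$. The tree depends only on $n$ and the fixed fan-in, not on the monoid, so the annotation over $M$ and the annotation over $N$ have the same set of cells. The plan is therefore to show that at every node $v$, applying $\rho$ to the $M$-label gives exactly the $N$-label.

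We fix the natural definition of the balanced product annotation, which the paper spells out later. For a word $w\in\Sigma^n$, the tree $T_n$ has leaves $1,\dots,n$ in left-to-right order. Each node $v$ covers a contiguous interval $I_v=[\ell_v,r_v]$, and its children $v_1,\dots,v_k$ partition $I_v$ in order. The label over $M$ is
\[
  \lambda_M(v)=\eta_M(w_{\ell_v}\cdots w_{r_v}),
\]
and likewise $\lambda_N(v)=\eta_N(w_{\ell_v}\cdots w_{r_v})$. Any fixed letters or structural bits stored in a cell are identical in the two annotations, and $\rho$ acts only on the monoid component.

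The key steps are as follows.

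\emph{Base case (leaves).} At a leaf $i$ we have $\rho(\lambda_M(i))=\rho(\eta_M(w_i))=\eta_N(w_i)=\lambda_N(i)$, which is the definition of $\eta_N=\rho\eta_M$.

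\emph{Inductive step (internal nodes).} The local constraint at an internal node $v$ is $\lambda_M(v)=\lambda_M(v_1)\cdots\lambda_M(v_k)$. Applying $\rho$ and using that $\rho$ is a monoid morphism gives
\[
  \rho(\lambda_M(v))=\rho(\lambda_M(v_1))\cdots\rho(\lambda_M(v_k)).
\]
By induction each factor equals $\lambda_N(v_j)$. The right-hand side is then exactly the $N$-constraint at $v$, so it equals $\lambda_N(v)$. Alternatively, one can apply $\rho$ directly to the closed form: $\rho\eta_M(w_{I_v})=\eta_N(w_{I_v})$.

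\emph{Root and empty word.} The root value maps to the root value. If the convention for $n=0$ uses the identity, it is preserved because $\rho(1_M)=1_N$.

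No step is a genuine obstacle. The only point needing care is that ``balanced'' must be a shape fixed by $n$ alone: the split points cannot depend on the monoid or on the labels. Given that, the cell sets coincide and the statement is purely that $\rho$ commutes with iterated products.

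As a corollary, the canonical $M$-annotation projects onto an accepting $N$-annotation. Any cell that is unchanged over $M$ after a substitution is also unchanged over $N$, so stability over $N$ is no worse than over $M$.
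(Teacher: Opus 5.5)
Your proof is correct and matches the paper's argument: the leaf case is $\rho\eta_M(w_i)=\eta_N(w_i)$ and the internal case is $\rho(xy)=\rho(x)\rho(y)$, closed by induction over the same fixed tree $T_n$. Your extra remarks are also sound: the two annotations share a cell set because $T_n$ depends only on $n$, identity padding is preserved by $\rho(1_M)=1_N$, and the set of changed $N$-cells is contained in the set of changed $M$-cells.
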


\begin{proof}
At a leaf, $\rho(\eta_M(w_i))=\eta_N(w_i)$.  At an internal node, $\rho(xy)=\rho(x)\rho(y)$,
so induction over the product tree gives the claim.
\end{proof}

Thus the choice of recogniser affects constants in the label alphabet, but not the
asymptotic locality statements proved below.

This lemma is intentionally only a statement about the product annotation.  It is not
a closure theorem for the class of languages admitting arbitrary constant-stability
schemes.  Definition~\ref{def:scheme} permits several inequivalent accepting
annotations for the same word, and a general scheme need not carry a cellwise quotient
map.  Consequently the closure properties needed for a variety theorem are separate
open problems, not consequences of Lemma~\ref{lem:quotientannotation}.

\paragraph{Local and algebraic language classes.}
A language is \emph{strictly $k$--local} if membership of sufficiently long words is
specified by a set of allowed length--$k$ factors together with allowed
length--$(k-1)$ prefixes and suffixes.  The union over $k$ is denoted $\St$.  A language
is \emph{locally testable}, denoted $\Lt$, if membership depends on the set of length
$k$ factors and the two boundary words of length $k-1$, for some $k$.  The classical
algebraic theory gives
\[
  \St\subsetneq\Lt\subsetneq\Ap,
\]
where $\Ap$ is the class of aperiodic syntactic monoids; star--free languages are
exactly the aperiodic regular languages.  Sch\"utzenberger proved the algebraic
aperiodicity theorem for star-free languages \cite{Schutzenberger65}.  McNaughton and
Papert gave the logical counter-free automata account \cite{McNaughtonPapert71}.
Brzozowski and Simon developed one of the classical characterisations of locally
testable events \cite{BrzozowskiSimon73}.  Zalcstein gave an independent early account
of locally testable languages \cite{Zalcstein72}.  Eilenberg's variety theory supplies
the general language-to-monoid framework \cite{Eilenberg76}, and Pin's monograph is the
standard reference for these pseudovariety classifications \cite{Pin86}.  The class
$\ZG$ is the monoid class appearing in constant--time trusted dynamic membership for
regular languages \cite{AJP21}.

\paragraph{Balanced product trees.}
A balanced ordered binary tree over $[n]$ has leaves $1,\dots,n$ from left to right and
height at most $\lceil\log_2 n\rceil+1$.  One obtains such a tree by recursively
splitting each interval into two nearly equal subintervals.  Each internal node
corresponds to a contiguous interval, its children partition that interval into a left
and a right interval, and a full binary tree with $n$ leaves has exactly $2n-1$ nodes.

\section{Local annotation schemes}
\label{sec:schemes}

The model separates three objects: the input word, an annotation, and a checker.  The
annotation may be produced by any method; the checker sees only the input symbols and
the annotation cells named by its local rules.

\begin{definition}[Local annotation scheme]
\label{def:scheme}
Fix a recogniser $(M,\eta,F)$ for $L$.  A local annotation scheme assigns to every
length $n$ the following data.
\begin{enumerate}[label=(\alph*),leftmargin=2em,itemsep=2pt]
  \item A finite set $I(n)$ of annotation cells and a fixed finite label alphabet
  $\Lambda$, independent of $n$.  An annotation is a map
  $\gamma:I(n)\to\Lambda$.

  \item A finite set $\Pi(n)$ of local rules.  Each rule reads at most $r$ annotation
  cells and at most $r$ input symbols, and evaluates a fixed predicate on the values it
  reads.  The arity bound $r$ is independent of $n$.

  \item A root rule reading at most $r$ annotation cells.
\end{enumerate}
The checker accepts $(w,\gamma)$ iff all local rules and the root rule accept.  The
scheme recognises $L$ with perfect soundness when, for every $w\in\Sigma^n$,
\[
  \exists \gamma\; C(w,\gamma)=1
  \quad\Longleftrightarrow\quad
  w\in L .
\]
\end{definition}

The definition does not impose a running time on the process that finds $\gamma$.  The
point is external checkability: once an annotation is proposed, correctness follows
from bounded--arity constraints.

\begin{definition}[Canonical annotation and costs]
\label{def:costs}
A canonical annotation map for a scheme is a function
$w\mapsto\gamma_w$ defined for all words of length $n$, such that $\gamma_w$ is
accepted whenever $w\in L$.  Relative to such a map define
\begin{align*}
  \size(n) &:= |I(n)|\cdot \lceil \log_2 |\Lambda|\rceil,\\
  \loc &:= r,\\
  \edit(n) &:=
  \max_{w\in\Sigma^n}\max_{i\in[n],a\in\Sigma}
  \left|\{j\in I(n):\gamma_w(j)\ne\gamma_{\mathrm{sub}_{i}^{a}(w)}(j)\}\right|.
\end{align*}
We also write $\rev(n)$ for the maximum number of local rules whose truth value has to
be rechecked after a substitution and the corresponding canonical annotation update.
\end{definition}

The stability measure counts changed cells, not the bit distance inside a cell.  This
is the natural analogue of node changes in a dynamic labelled data structure.  The
fixed-alphabet condition prevents one cell from encoding the entire length-$n$ word.
An extended variable-alphabet model is possible, but then cell movement must be read
together with bit movement and annotation size.

\begin{definition}[Bit edit stability]
\label{def:bitstability}
Fix an injective encoding
$c:\Lambda\to\{0,1\}^{b}$, where
$b=\lceil\log_2|\Lambda|\rceil$.  The bit edit stability of a canonical section is
\[
  \bedit(n)=
  \max_{ww'\in E(Q_n)}
  \sum_{j\in I(n)}
  \dist_{\Ham}\!\left(c(\gamma_w(j)),c(\gamma_{w'}(j))\right).
\]
\end{definition}

\begin{lemma}[Cell/bit comparison]
\label{lem:cellbit}
For every injective fixed-length label encoding,
\[
  \edit(n)\le \bedit(n)\le
  \edit(n)\,\lceil\log_2|\Lambda|\rceil.
\]
\end{lemma}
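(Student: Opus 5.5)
The argument is cellwise.  Fix an edge $ww'\in E(Q_n)$ of the substitution graph, so that $w'=\mathrm{sub}_i^a(w)$ for some $i\in[n]$ and $a\in\Sigma$.  Then split the bit sum in Definition~\ref{def:bitstability} over the cells $j\in I(n)$.  Write $D(w,w')=\{j\in I(n):\gamma_w(j)\ne\gamma_{w'}(j)\}$ for the set of changed cells.  Every cell outside $D(w,w')$ contributes $\dist_{\Ham}(c(x),c(x))=0$, so the bit sum runs only over $D(w,w')$.

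For the upper bound, each codeword lies in $\{0,1\}^b$ with $b=\lceil\log_2|\Lambda|\rceil$.  The Hamming distance between two length-$b$ strings is at most $b$, so the sum is at most $|D(w,w')|\cdot b$.  Taking the maximum over edges gives $\bedit(n)\le\edit(n)\,b$.  Here I use that the edges of $Q_n$ are exactly the pairs $(w,\mathrm{sub}_i^a(w))$, which is the same index set over which $\edit(n)$ is maximised in Definition~\ref{def:costs}.

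For the lower bound, take $j\in D(w,w')$.  Then $\gamma_w(j)\ne\gamma_{w'}(j)$, and since $c$ is injective, $c(\gamma_w(j))\ne c(\gamma_{w'}(j))$.  So each such cell contributes at least one bit, and the sum is at least $|D(w,w')|$.  Maximising over edges gives $\edit(n)\le\bedit(n)$.

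There is no real obstacle.  The one point to check is that $\edit$ and $\bedit$ maximise over the same set of pairs.  Definition~\ref{def:costs} also allows the trivial substitution $a=w_i$, but in that case $w'=w$ and both quantities are $0$, so those pairs do not change either maximum.  If $Q_n$ is read as an undirected graph, both quantities are symmetric in $w$ and $w'$, so orientation does not matter either.  Two degenerate cases need a remark.  When $|\Lambda|=1$ we have $b=0$, no cell can ever change, and all three terms are $0$.  When $n$ is such that $Q_n$ has no edges, both maxima are $0$ by convention.
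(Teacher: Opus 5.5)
Your proposal is correct and follows the paper's proof: injectivity of $c$ gives at least one changed bit per changed cell, fixed length $b=\lceil\log_2|\Lambda|\rceil$ gives at most $b$, and you sum over cells and then maximise over substitution edges. Your remarks on trivial substitutions $a=w_i$, the case $|\Lambda|=1$, and the edgeless case are harmless bookkeeping that the paper leaves implicit.
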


\begin{proof}
Every changed cell changes at least one encoded bit and at most all
$\lceil\log_2|\Lambda|\rceil$ encoded bits.  Summing over cells and taking the
maximum over substitution edges gives both inequalities.
\end{proof}

Thus a small cell movement is meaningful only together with the label size.  In the
main fixed-alphabet model, cell and bit movement differ by only a constant factor.  In
the extended variable-alphabet model, enlarging the alphabet can pack several algebraic
values into one changed cell, but the packed information remains visible in
$\size(n)$ and $\bedit(n)$.  The natural invariant is
therefore the resource profile
\[
  \bigl(\size(n),\loc,\edit(n),\bedit(n),\rev(n)\bigr),
\]
not $\edit(n)$ in isolation.

\paragraph{Rule locality and non-encoding.}
The rule family may depend on the length $n$, but each rule has bounded arity.  The
content of a rule is a finite predicate on the fixed-alphabet values and input symbols
in its scope.  It cannot read an unbounded address, an unbounded counter, or a
length-dependent label carrying the entire word.  A large number of local rules is
permitted, just as a circuit may have many gates, but no single rule performs a global
computation.

\paragraph{Canonical annotations for nonmembers.}
The canonical map is defined on all words, not only on members.  For the product
annotations below, nonmembers receive the same locally consistent monoid labels as
members; the only failing condition is the root membership test.  This convention is
important for edit stability, because a substitution may move a word into or out of the
language while the cell identities remain fixed.

\paragraph{A Lipschitz-section viewpoint.}
For each length $n$, let $Q_n$ be the substitution graph on $\Sigma^n$: two words are
adjacent when they differ in one position.  A canonical annotation map is a section
\[
  \Gamma_n:\Sigma^n\longrightarrow \Lambda^{I(n)},
  \qquad \Gamma_n(w)=\gamma_w,
\]
of the local acceptance relation chosen by the checker.  The stability cost is exactly
the Lipschitz constant of this section from $Q_n$ with graph distance one to the
Hamming metric on annotation cells:
\[
  \edit(n)=\max_{ww'\in E(Q_n)}
  \dist_{\Ham}(\Gamma_n(w),\Gamma_n(w')).
\]
Thus the problem is not whether membership can be updated, but whether the global
regular invariant admits a locally checkable realisation whose accepted representatives
move little on every substitution edge.

\paragraph{Four costs, four different questions.}
The quantities $\size$, $\loc$, $\edit$, and $\rev$ are intentionally separated.
Annotation size measures stored algebraic information; locality measures the largest
arity of a local consistency test; edit stability measures how many canonical labels
change; revalidation locality measures how many local constraints have to be checked
again after the canonical update.  A structure can be excellent for one of these costs
and poor for another.  The prefix annotation, for example, has the same label alphabet
as the balanced tree but a much worse edit profile.

\paragraph{Worst-case versus distributional movement.}
This paper uses $\edit(n)$ for the worst-case Lipschitz constant because it is the
right notion for adversarial local certification: every substitution edge must have a
nearby accepted representative.  It is not the only possible stability invariant.  For
a distribution $\mu_n$ on substitution edges one can also study
\[
  \mathbb E_{ww'\sim\mu_n}
  \dist_{\Ham}(\Gamma_n(w),\Gamma_n(w')),
\]
and for edit sequences one can ask for amortised movement.  These distributional
variants can separate behaviours that worst-case movement collapses.  In the ancillary
product atlas, for example, group rows have mean movement close to their maximum,
whereas locally testable rows have much smaller mean movement despite the same product
worst case.  The present lower-bound theorems are worst-case statements, while
Proposition~\ref{prop:meanmovement} records the product-geometry theorem currently
available for the expected version.  Extending that distributional theory beyond
product annotations is left open rather than implied by the worst-case results.

\begin{remark}[Exact deterministic checking necessarily touches the input]
\label{rem:readall}
A deterministic exact checker that never inspects a position on which membership can
depend cannot have perfect soundness: the uninspected symbol can be changed while the
annotation is kept fixed.  Sublinear verification for language membership therefore
requires a weaker guarantee, usually proximity soundness and randomness.  The model
studied here keeps soundness exact and deterministic, and pays for it by allowing a
linear number of local rules, each of bounded arity.
\end{remark}

\section{Two universal annotations}
\label{sec:universal}

Let $(M,\eta,F)$ recognise $L$.  Both schemes in this section use monoid elements as
labels.  The first is the usual left--to--right product.  The second is a balanced
association of the same product.

\subsection{The prefix annotation}

\begin{definition}[Prefix annotation]
\label{def:prefix}
Let $I(n)=\{0,1,\dots,n\}$ and $\Lambda=M$.  The canonical annotation is
\[
  \gamma_w(i)=\eta(w_1\cdots w_i),\qquad 0\le i\le n,
\]
with $\gamma_w(0)=1_M$.  The local rules are
\[
  \gamma(0)=1_M,
  \qquad
  \gamma(i)=\gamma(i-1)\eta(w_i)\quad (1\le i\le n),
\]
and the root rule is $\gamma(n)\in F$.
\end{definition}

\begin{lemma}[Correctness and instability of the prefix annotation]
\label{lem:prefix}
The prefix annotation recognises $L$ with perfect soundness, has locality at most $3$,
and has size $(n+1)\lceil\log_2|M|\rceil$.  For every recognised language its stability
is at most $n+1$.  This upper bound is sharp up to constants: for parity over
$\{a,b\}$, with $\eta(a)=1$ and $\eta(b)=0$ in $\mathbb Z/2\mathbb Z$, the stability is
$\Theta(n)$.
\end{lemma}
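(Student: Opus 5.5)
The proof checks each clause of Lemma~\ref{lem:prefix} in turn: correctness, the locality and size counts, the general upper bound, and the parity lower bound.

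\emph{Correctness.} We prove perfect soundness in both directions.
If $w\in L$, the canonical annotation $\gamma_w(i)=\eta(w_1\cdots w_i)$ satisfies every local rule, since $\eta$ is a morphism: $\eta(w_1\cdots w_i)=\eta(w_1\cdots w_{i-1})\eta(w_i)$. It also satisfies the root rule, because $\gamma_w(n)=\eta(w)\in F$.
Conversely, suppose some $\gamma$ passes all checks. Induction on $i$, starting from $\gamma(0)=1_M$ and using the rule $\gamma(i)=\gamma(i-1)\eta(w_i)$, shows that $\gamma(i)=\eta(w_1\cdots w_i)$ for every $i$; that is, every accepted annotation is the canonical one. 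The root rule then gives $\eta(w)=\gamma(n)\in F$, so $w\in L$.

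\emph{Locality and size.} Each rule $\gamma(i)=\gamma(i-1)\eta(w_i)$ reads two annotation cells and one input symbol, for at most $3$ values. The initial rule $\gamma(0)=1_M$ and the root rule $\gamma(n)\in F$ each read one cell. Hence $\loc\le 3$. There are $n+1$ cells with labels in $\Lambda=M$, so $\size(n)=(n+1)\lceil\log_2|M|\rceil$ directly from Definition~\ref{def:costs}.

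\emph{General upper bound.} The canonical map is defined on all words, and there are only $n+1$ cells, so no two annotations differ in more than $n+1$ cells. This gives $\edit(n)\le n+1$ for every recognised language.

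\emph{Parity lower bound.} This is the only step with content. Take $\eta(a)=1$, $\eta(b)=0$ in $\mathbb Z/2\mathbb Z$, so that $\gamma_w(j)$ is the parity of the number of $a$'s in $w_1\cdots w_j$.
Take $w=b^n$ and substitute at position $i=1$ to get $w'=ab^{n-1}$. Then $\gamma_w(j)=0$ and $\gamma_{w'}(j)=1$ for every $1\le j\le n$, while only cell $0$ agrees. The two annotations therefore differ in exactly $n$ cells, so $\edit(n)\ge n$.
More generally, a substitution at position $i$ that changes the group value toggles every cell $j\ge i$. Together with the upper bound this gives $\Theta(n)$.

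The one point needing care is that $\edit$ is measured on canonical annotations of both words, including nonmembers. Here this is automatic, since the canonical prefix labels are defined independently of membership.
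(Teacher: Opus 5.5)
Your proof is correct and follows essentially the same route as the paper: the same induction forcing $\gamma(i)=\eta(w_1\cdots w_i)$ gives soundness and completeness, and the same witness $w=b^n$, edited at position $1$, gives the parity lower bound. The only cosmetic difference is in the upper bound: you use the total cell count $n+1$, whereas the paper notes that cells $0,\dots,i-1$ cannot change, which gives the same bound (in fact at most $n-i+1$ changed cells).
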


\begin{proof}
If all local rules hold, induction gives
$\gamma(i)=\eta(w_1\cdots w_i)$ for each $i$.  Hence the root rule holds exactly when
$\eta(w)\in F$, i.e. when $w\in L$.  This proves both completeness and perfect
soundness.

A substitution at position $i$ cannot change labels $0,\dots,i-1$ and can only change
labels $i,\dots,n$, so the stability is at most $n+1$.  For parity take $w=b^n$ and
edit the first symbol to $a$.  All nonzero prefix labels switch from $0$ to $1$, so
$n$ cells change.
\end{proof}

\subsection{The balanced product annotation}

\begin{definition}[Balanced product annotation]
\label{def:tree}
Fix a balanced ordered binary tree $T_n$ over the input positions.  The cell set
$I(n)$ is the node set of $T_n$ and $\Lambda=M$.  For a leaf $i$ set
$\gamma_w(i)=\eta(w_i)$.  For an internal node $v$ with left child $\ell$ and right
child $r$, set
\[
  \gamma_w(v)=\gamma_w(\ell)\gamma_w(r).
\]
Equivalently, $\gamma_w(v)$ is the ordered product of the letters below $v$.  Local
rules require each leaf label to equal $\eta(w_i)$ and each internal label to equal the
product of its two children.  The root rule is $\gamma(\rootnode)\in F$.
\end{definition}

\begin{theorem}[Universal logarithmic annotation]
\label{thm:universal}
Every regular language recognised by $(M,\eta,F)$ admits the balanced product
annotation with perfect soundness and
\[
  \loc\le 3,
  \qquad
  \size(n)=(2n-1)\lceil\log_2|M|\rceil,
\]
\[
  \edit(n)\le \lceil\log_2 n\rceil+\bigO(1),
  \qquad
  \rev(n)=\bigO(\log n).
\]
The updated root label gives the membership answer in constant time.
\end{theorem}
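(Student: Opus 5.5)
The plan is to verify the four resource bounds of Theorem~\ref{thm:universal} directly from Definition~\ref{def:tree}, in the same way Lemma~\ref{lem:prefix} was handled for the prefix annotation. Soundness and completeness come first.

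\emph{Correctness.} For completeness, the canonical labelling $\gamma_w$ satisfies every leaf and internal rule by construction. For soundness, suppose an annotation $\gamma$ satisfies all local rules. Induction on node height then shows that $\gamma(v)=\eta(w_{[v]})$, where $w_{[v]}$ is the factor of $w$ spanned by the interval of $v$. The base case is the leaf rule. The inductive step uses the internal rule together with the fact that the children's intervals partition $v$'s interval in left-to-right order, so that $\eta(w_{[\ell]})\eta(w_{[r]})=\eta(w_{[v]})$ because $\eta$ is a morphism. At the root this gives $\gamma(\rootnode)=\eta(w)$. The root rule therefore accepts exactly when $w\in L$.

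\emph{Locality and size.} A leaf rule reads one cell and one symbol, and an internal rule reads three cells. Hence $\loc\le 3$, and the root rule reads only one cell. A full binary tree with $n$ leaves has $2n-1$ nodes, and each node stores one element of $\Lambda=M$, so $\size(n)=(2n-1)\lceil\log_2|M|\rceil$.

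\emph{Edit stability and revalidation.} Consider a substitution at position $i$. Every node $v$ whose interval avoids $i$ has an unchanged factor $w_{[v]}$, so $\gamma_w(v)$ is unchanged. The only cells that can change are therefore those on the path from leaf $i$ to the root. By the height bound in the preliminaries, this path has at most $\lceil\log_2 n\rceil+1$ nodes, which gives $\edit(n)\le\lceil\log_2 n\rceil+\bigO(1)$.

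For $\rev(n)$, a local rule's truth value can change only if it reads a changed cell or the edited symbol. Such a rule is one of the following:
\begin{itemize}
\item the leaf rule at $i$;
\item the internal rule at a path node, which reads that node's label;
\item the internal rule at the parent of a path node, which is again a path node;
\item the root rule.
\end{itemize}
Every internal node on the path, including the root, has exactly one rule that could change, so at most $\lceil\log_2 n\rceil+\bigO(1)$ rules need rechecking. This gives $\rev(n)=\bigO(\log n)$.

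\emph{Constant-time membership access.} After the update, membership is decided by the test $\gamma(\rootnode)\in F$. This tests a single cell against a fixed finite set, which takes constant time.

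The only point that needs any care is the height claim for the balanced tree. Recursive near-halving gives depth $\lceil\log_2 n\rceil$ edges, which is $\lceil\log_2 n\rceil+1$ nodes on a root-to-leaf path. I will take this from the preliminaries rather than reprove it. Everything else is routine bookkeeping.
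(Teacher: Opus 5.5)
Your proposal is correct and follows essentially the same route as the paper. Induction on node height forces every accepted label to be the ordered interval product, so the root carries $\eta(w)$; scope counting gives $\loc\le 3$ and the $2n-1$ node count gives the size; and a substitution at $i$ can only alter labels on the leaf-to-root path of $i$, so only the leaf rule, the product rules at path nodes, and the root rule need revalidation.
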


\begin{proof}
Suppose the checker accepts $(w,\gamma)$.  The leaf rules force
$\gamma(i)=\eta(w_i)$ at every leaf.  By induction on node height, every internal rule
then forces $\gamma(v)$ to equal the ordered monoid product of the input letters in the
interval below $v$.  At the root this product is $\eta(w)$.  The root rule therefore
holds exactly when $w\in L$.  Thus the scheme has perfect soundness and completeness.

Every local rule reads at most three labels or one label and one input symbol.  The
number of nodes is $2n-1$ up to the harmless identity padding convention, giving the
size bound.

After changing position $i$, only the leaf $i$ and its ancestors can have a different
canonical value.  Some ancestors may remain numerically equal in a noncancellative
monoid, so the correct statement is an upper bound rather than equality.  The number of
possible changed cells is at most the height plus one, namely $\lceil\log_2 n\rceil+
\bigO(1)$.  Recomputing these labels bottom--up costs one monoid multiplication per
internal ancestor.  The only local rules whose truth value can change are the leaf rule
at $i$ and the internal product rules incident with this ancestor path, so the
incremental revalidation cost is also logarithmic.
\end{proof}

\subsection{Distributional product movement}
\label{subsec:distributional}

Worst-case stability is used for adversarial edits.  The same
canonical product annotation also has a distributional profile.  For a fixed canonical
map $\Gamma_n$, let
\[
  \avg(n)=
  \mathbb E_{w\in\{0,1\}^n,\;i\in[n]}
  \dist_{\Ham}\!\left(\Gamma_n(w),\Gamma_n(w\oplus e_i)\right),
\]
where the word and edited position are uniform and $w\oplus e_i$ flips the $i$th bit.

\begin{proposition}[Mean product movement separates cancellation from absorption]
\label{prop:meanmovement}
For the balanced product annotation over $\{0,1\}$, the parity group
$\mathbb Z/2\mathbb Z$ has $\avg(n)=\Theta(\log n)$.  For the aperiodic semilattice
language ``contains $1$'', recognised by $(\{0,1\},\vee)$ with accepting value $1$, the
same balanced product geometry has $\avg(n)<2$ for every $n$.
\end{proposition}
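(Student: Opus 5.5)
The plan is to compute the Hamming distance on each substitution edge exactly, node by node, and then average. Flipping bit $i$ can change only the leaf $i$ and its ancestors (as in the proof of Theorem~\ref{thm:universal}). So
\[
  \dist_{\Ham}\bigl(\Gamma_n(w),\Gamma_n(w\oplus e_i)\bigr)
  =\sum_{v\ \text{ancestor of } i\ \text{(incl.\ leaf)}}
  \mathbf 1\bigl[\gamma_w(v)\ne\gamma_{w\oplus e_i}(v)\bigr].
\]
By linearity of expectation it suffices to find, for each ancestor $v$ of $i$, the probability that its label changes under a uniform word. This probability will depend only on the size $|v|$ of the interval below $v$.

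\textbf{Parity.} In $\mathbb Z/2\mathbb Z$ the label at $v$ is the sum of the bits below $v$, and flipping any bit below $v$ toggles it. Every ancestor therefore changes deterministically, whatever $w$ is. The movement on the edge is thus exactly the depth of leaf $i$ plus one. In a balanced tree built by near-halving, every leaf has depth between $\lfloor\log_2 n\rfloor$ and $\lceil\log_2 n\rceil$: each step from a node to a child at most halves the interval, rounded up, and shrinks it by a factor of at least roughly two. Hence the movement is $\log_2 n+\bigO(1)$ for every $i$ and $w$, and averaging gives $\avg(n)=\Theta(\log n)$. Indeed this also shows the worst case equals the mean up to $\bigO(1)$.

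\textbf{``Contains $1$''.} Here the label at $v$ is the OR of the bits below $v$. Flipping bit $i$ changes this OR exactly when all other bits below $v$ are $0$. For a uniform word this event has probability $2^{-(|v|-1)}$. Let $s_0=1<s_1<\dots<s_h=n$ be the interval sizes along the ancestor path of $i$, from leaf to root. The expected movement at $i$ is then $\sum_{k=0}^{h}2^{1-s_k}$.

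The key step, and the only place requiring care, is to show these sizes grow fast enough that the sum stays below $2$. Because children are obtained by near-halving, a child of size $s$ has a parent of size at least $2s-1$, and the sizes are strictly increasing. Strict increase alone gives
\[
  \sum_k 2^{1-s_k}\le\sum_{s\ge1}2^{1-s}=2.
\]
Equality would require every size $1,2,3,\dots$ to occur along an infinite path, which is impossible for finite $n$, so the inequality is strict. As a sharper check, using $s_{k+1}\ge 2s_k-1$ gives the size sequence $1,\ge2,\ge3,\ge5,\dots$, whose terms sum to at most $1+\tfrac12+\tfrac14+\tfrac1{16}+\dots<2$.

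The main obstacle is the edge case of a parent of size $2$ over a leaf, together with the boundary cases of the rounding convention. I would handle these by first stating the lemma $s_{k+1}\ge s_k+1$ explicitly, since strict monotonicity already suffices, and then averaging the per-$i$ bound over $i$ to get $\avg(n)<2$ for every $n$.
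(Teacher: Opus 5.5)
Your proposal is correct and follows the paper's proof. For parity, group cancellation makes every ancestor label flip, so the movement equals the root-path length. For ``contains $1$'', an ancestor interval of size $m$ changes with probability $2^{-(m-1)}$, and strictly increasing sizes along the path keep the sum strictly below $\sum_{m\ge1}2^{-(m-1)}=2$. The paper only asserts the depth bounds for the near-halving tree, which you prove explicitly; your sharper check via $s_{k+1}\ge 2s_k-1$ is an optional refinement not needed for the statement.
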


\begin{proof}
For parity, flipping any bit changes the projected value of every product interval
containing that bit: in the group, left and right cancellation reduce equality of the
old and new interval products to equality of the two edited letter values.  Thus the
number of changed labels is exactly the number of nodes on the edited leaf-to-root
path.  In a balanced binary tree the average such path length is $\Theta(\log n)$.

For ``contains $1$'', a node label is the disjunction of the bits in its interval.  Fix
an edited position $i$ and an ancestor interval $J$ of size $m$.  The label at $J$
changes after flipping $w_i$ iff all other $m-1$ bits in $J$ are zero.  Under the
uniform word distribution this event has probability $2^{-(m-1)}$.  Along a root path
the interval sizes are strictly increasing, so for the fixed position
\[
  \mathbb E[\text{changed ancestor labels at }i]
  \le \sum_{m\ge1}2^{-(m-1)}=2.
  \]
The strict inequality follows because the finite root path omits all but its actual
interval sizes.  Averaging over $i$ preserves the bound.
\end{proof}

This proposition is not an optimality statement for aperiodic languages.  The language
``contains $1$'' is locally testable, not strict-local under the bounded-window
definition used below, and its worst-case product movement is still the tree height.
The proposition has the following limited role: inside the same product geometry,
group cancellation gives logarithmic mean movement, whereas idempotent absorption can
keep the mean bounded.

\subsection{Bounded fan--in product geometries}
\label{subsec:dary}

The binary tree is not essential.  It is the first point on a tradeoff between local
arity and update height.

\begin{definition}[$d$--ary product annotation]
\label{def:dary}
Fix an integer $d\ge 2$.  A $d$--ary product tree over $[n]$ is an ordered rooted tree
whose leaves are $1,\dots,n$, whose internal nodes have between two and $d$ children,
and whose height is $\lceil\log_d n\rceil+\bigO(1)$.  A node is labelled by the ordered
monoid product of the leaves in its interval.  The local rule at an internal node says
that its label is the product of the labels of its children in left--to--right order;
the leaf and root rules are as in Definition~\ref{def:tree}.
\end{definition}

\begin{proposition}[Arity--height tradeoff]
\label{prop:dary}
For every fixed $d\ge2$, the $d$--ary product annotation recognises $L$ with perfect
soundness and satisfies
\[
  \loc\le d+1,
  \qquad
  \size(n)=\bigO_d(n\log |M|),
  \qquad
  \edit(n)=\bigO(\log_d n),
  \qquad
  \rev(n)=\bigO(d\log_d n).
\]
The root label still answers membership in constant time.
\end{proposition}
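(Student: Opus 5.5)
The plan is to mirror the proof of Theorem~\ref{thm:universal}, replacing the binary product rule by a $d$-ary ordered product rule, and then account for each resource separately.

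\emph{Soundness.} Suppose the checker accepts $(w,\gamma)$. The leaf rules force $\gamma(i)=\eta(w_i)$. Induction on node height then shows that every internal label equals the ordered product of the letters in its interval. The step is that an internal rule states
\[
\gamma(v)=\gamma(c_1)\cdots\gamma(c_k)
\]
for the children $c_1,\dots,c_k$ in left-to-right order, with $2\le k\le d$. Because the children's intervals partition $v$'s interval contiguously and in order, associativity gives the interval product. At the root we obtain $\eta(w)$, so the root rule $\gamma(\rootnode)\in F$ holds exactly when $w\in L$. Completeness holds because the canonical labelling satisfies every rule by definition.

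\emph{Locality and size.} An internal rule reads the label of $v$ and at most $d$ child labels, so its arity is at most $d+1$. A leaf rule reads one label and one symbol, and the root rule reads one label, so $\loc\le d+1$. For size, every internal node has at least two children, so the number of internal nodes is at most $n-1$. The total node count is therefore at most $2n-1$, which gives $\size(n)\le(2n-1)\lceil\log_2|M|\rceil=\bigO_d(n\log|M|)$. The bound is in fact uniform in $d$.

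\emph{Edit stability and revalidation.} A substitution at position $i$ changes only the leaf $i$ and, possibly, its ancestors. Any other node's interval avoids $i$, so its canonical label is unchanged. The number of changed cells is therefore at most the height plus one, namely $\lceil\log_d n\rceil+\bigO(1)=\bigO(\log_d n)$. The rules whose truth value can change are those that read a changed cell:
\begin{itemize}
\item the leaf rule at $i$;
\item for each ancestor $u$ on the path, the rule at $u$ itself (which reads $u$'s changed label);
\item the rule at $u$'s parent (which reads $u$ as a child).
\end{itemize}
Each ancestor therefore contributes $\bigO(1)$ rules, each of arity at most $d+1$. That yields $\bigO(\log_d n)$ rules and $\bigO(d\log_d n)$ cell reads to revalidate. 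Bottom-up recomputation uses at most $d-1$ monoid multiplications per ancestor, which matches the stated $\rev(n)=\bigO(d\log_d n)$ when revalidation is measured in work.

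\emph{Membership access.} After the update the root label is $\eta(w')$, and membership is the test $\eta(w')\in F$, a constant-time lookup.

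The main obstacle is mild: the height bound assumes such trees exist. I would construct them explicitly by recursively splitting each interval into $\min(d,\text{length})$ nearly equal contiguous blocks. This keeps every internal node at fan-in between $2$ and $d$ and gives height $\lceil\log_d n\rceil+\bigO(1)$. I would also make explicit the accounting convention for $\rev$ (rules versus cell reads), so that the stated $d\log_d n$ form follows.
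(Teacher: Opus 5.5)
Your proof is correct and follows the paper's own argument: the same height induction gives soundness, the ancestor-path bound gives edit stability, and the factor $d$ in $\rev(n)$ comes from rereading all children of each changed node. Your explicit node count of $2n-1$, the recursive nearly-equal splitting that realises the height bound, and your observation that the number of affected rules is only $\bigO(\log_d n)$ are refinements of details the paper states more tersely.
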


\begin{proof}
The correctness proof is the same induction as in Theorem~\ref{thm:universal}, with
one multiplication constraint of arity at most $d+1$ at each internal node.  A single
substitution can change only the edited leaf and its ancestors.  The number of such
nodes is bounded by the tree height.  Revalidating a changed internal node may require
reading all of its children, hence the factor $d$ in the displayed bound.  For fixed
$d$ this is logarithmic in $n$.
\end{proof}

\begin{definition}[Edit-active group quotient]
\label{def:activegroup}
Let $(M,\eta,F)$ recognise $L$.  A finite group quotient $\pi:M\to G$ is
\emph{edit-active} for the substitution model if there exist letters $a,b\in\Sigma$ with
$\pi\eta(a)\ne\pi\eta(b)$.  It is
\emph{membership-visible} if some such pair $a,b$ is separated by the language: there
are contexts $x,y\in\Sigma^{*}$ such that $xay\in L$ and $xby\notin L$, or conversely.
\end{definition}

The activity condition says that a single permitted edit can change the group component.
The visibility condition is not needed for the product-label lower bound, but it is what
makes the quotient relevant to the language rather than to an arbitrary nonminimal
recogniser.  For nonminimal recognisers it is an extra hypothesis, not automatic.

\begin{lemma}[Syntactic activity is membership-visible]
\label{lem:syntacticvisible}
Let $\eta_L:\Sigma^*\to\synt{L}$ be the syntactic morphism.  If a quotient
$\pi:\synt{L}\to G$ is edit-active, then it is membership-visible.
\end{lemma}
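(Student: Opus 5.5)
Since $\pi\eta_L(a)\ne\pi\eta_L(b)$, we have $\eta_L(a)\ne\eta_L(b)$ in $\synt{L}$, because $\pi$ is a function. The plan is to unwind this inequality through the definition of the syntactic monoid and read off a separating context.

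The steps are as follows. First, $\synt{L}=\Sigma^*/{\equiv_L}$ and $\eta_L$ is the quotient map, so $\eta_L(a)\ne\eta_L(b)$ means exactly $a\not\equiv_L b$. Second, by the displayed definition of $\equiv_L$, $a\not\equiv_L b$ means there exist $x,y\in\Sigma^*$ for which $xay\in L \Longleftrightarrow xby\in L$ fails. Third, failure of this biconditional means exactly one of $xay$ and $xby$ lies in $L$. That gives $xay\in L$ and $xby\notin L$, or the converse, which is the membership-visibility condition of Definition~\ref{def:activegroup} for the same pair $a,b$.

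No step here is a real obstacle. The only point worth stating explicitly is that the witnessing pair for visibility can be taken to be the same pair $a,b$ that witnesses activity, so no search over other letter pairs is needed. It is also worth remarking why the hypothesis matters: for a nonminimal recogniser, $\eta(a)\ne\eta(b)$ does not imply $a\not\equiv_L b$. This is exactly the gap the preceding remark warns about.

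Injectivity on syntactic classes is what closes that gap. Note that the argument does not use that $G$ is a group or that $\pi$ is surjective; it only uses that $\pi$ is a well-defined map, so equal elements of $\synt{L}$ have equal images.
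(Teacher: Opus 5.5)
Your proposal is correct and follows essentially the same route as the paper: pass from $\pi\eta_L(a)\ne\pi\eta_L(b)$ to $\eta_L(a)\ne\eta_L(b)$, then unwind the syntactic congruence to get contexts $x,y$ that separate $xay$ from $xby$ for the same pair $a,b$. Your extra observation that the argument uses neither the group structure of $G$ nor surjectivity of $\pi$ is accurate.
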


\begin{proof}
Choose letters $a,b$ with $\pi\eta_L(a)\ne\pi\eta_L(b)$.  Then
$\eta_L(a)\ne\eta_L(b)$ in the syntactic monoid.  By the definition of the syntactic
congruence, two distinct syntactic classes are separated by some contexts: there are
$x,y\in\Sigma^*$ such that $xay$ and $xby$ have different membership values in $L$.
\end{proof}

\begin{proposition}[Projected group lower bound in product decompositions]
\label{prop:grouplower}
Fix $d\ge2$ and suppose that $(M,\eta,F)$ has an edit-active quotient onto a nontrivial
finite group $G$.  Consider any $G$-projected ordered product decomposition whose
dependency graph is a rooted tree of maximum fan-in $d$, whose leaf labels are the
projected elements $\pi\eta(w_i)$, and whose internal labels are the ordered products of
their children in $G$.  Then for every $n$ there is a one-symbol substitution after
which at least $\lceil\log_d n\rceil+1$ projected product labels change.
\end{proposition}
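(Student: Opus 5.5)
The argument has two parts. First, group cancellation propagates any leaf change all the way to the root. Second, a simple counting argument shows that some leaf lies at depth at least $\lceil\log_d n\rceil$.

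\textbf{Cancellation step.} Use edit-activity to fix letters $a,b$ with $g:=\pi\eta(a)\ne h:=\pi\eta(b)$. Take any word $w$ with $w_i=a$ and let $w'=\mathrm{sub}_i^{b}(w)$. Consider an internal node $v$ on the path from leaf $i$ to the root. Its label is an ordered product $x_1\cdots x_k$ of its children's labels, with $k\le d$, and exactly one child $x_j$ lies on the path. The labels of the other children are products over intervals not containing $i$, so they are unchanged.

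Suppose by induction that the label of the path child changes from $x_j$ to $x_j'\neq x_j$. Then
\[
x_1\cdots x_j'\cdots x_k = x_1\cdots x_j\cdots x_k
\]
would imply $x_j'=x_j$ after left and right cancellation in $G$, a contradiction. Hence the label of $v$ changes. The base case is the leaf $i$, whose label moves from $g$ to $h$. By induction every node on the leaf-to-root path changes. This matches the parity argument in Proposition~\ref{prop:meanmovement}, but now for a general, possibly non-abelian, group; order is preserved and only two-sided cancellation is used.

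\textbf{Depth step.} It remains to choose $i$ so that the path is long. A rooted tree of fan-in at most $d$ and height $t$ has at most $d^t$ leaves. Since there are $n$ leaves, some leaf has depth $t\ge\lceil\log_d n\rceil$. Its path therefore has at least $\lceil\log_d n\rceil+1$ nodes, counting the leaf and the root. Choose $i$ to be that leaf and $w=a^n$ (or any $w$ with $w_i=a$). Every label on the path then changes, which gives the bound.

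\textbf{Main obstacle.} The only delicate point is the edge case. When $n=1$ the bound is $1$, and the leaf itself changes. One should also check that the decomposition's leaves correspond bijectively to the positions $1,\dots,n$, so that there are exactly $n$ leaves. Internal nodes with a single child are harmless, since they only lengthen paths. No further hypothesis such as membership-visibility is needed.
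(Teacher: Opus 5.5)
Your proof is correct and follows the paper's argument. The paper likewise picks a leaf of depth at least $\lceil\log_d n\rceil$, puts $a$ at every position, switches that leaf to $b$, and applies two-sided cancellation in $G$ to each ancestor's product $u\,\pi\eta(a)\,v$ versus $u\,\pi\eta(b)\,v$; your node-by-node induction up the path is a stepwise form of that interval cancellation, and your remark that any $w$ with $w_i=a$ works harmlessly generalises the paper's choice $w=a^n$.
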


\begin{proof}
Every rooted tree with $n$ leaves and fan-in at most $d$ has a leaf $i$ of depth at
least $\lceil\log_d n\rceil$.  Choose letters $a,b$ with $\pi\eta(a)\ne\pi\eta(b)$ and
put $a$ at every leaf.  Change the deepest leaf $i$ from $a$ to $b$.  Consider any
ancestor interval of $i$.  Its old projected product has the form $u\,\pi\eta(a)\,v$ in
$G$, and its new product has the form $u\,\pi\eta(b)\,v$, where $u$ and $v$ are the
products of the leaves in the same interval to the left and right of $i$.  If these two
products were equal, multiplying by $u^{-1}$ on the left and by $v^{-1}$ on the right
would give $\pi\eta(a)=\pi\eta(b)$, a contradiction.  Hence the label at every ancestor
of $i$, together with the leaf label itself, changes.  The number of changed labels is
therefore at least the number of nodes on this root path, which is
$\lceil\log_d n\rceil+1$.
\end{proof}

\begin{corollary}[Parity as the minimal group obstruction]
\label{prop:productlower}
For parity over $\{a,b\}$, recognised by $\mathbb Z/2\mathbb Z$ with distinct projected
letter values, any bounded-degree product decomposition that exposes the quotient
product labels has worst-case edit stability $\Omega(\log n)$.  The balanced product
annotation attains this bound up to constants.
\end{corollary}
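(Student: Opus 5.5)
The corollary combines Proposition~\ref{prop:grouplower} with Theorem~\ref{thm:universal}, so the plan is to check the hypotheses of the former and read off the upper bound from the latter.

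\textbf{Lower bound.} Take $M=G=\mathbb Z/2\mathbb Z$ and let $\pi$ be the identity quotient. Define $\eta(a)=1$ and $\eta(b)=0$, with $F=\{0\}$ or $\{1\}$ according to the parity convention. The quotient is edit-active because $\pi\eta(a)\ne\pi\eta(b)$. A bounded-degree product decomposition exposing the quotient labels is, by hypothesis, a rooted tree of some fixed maximum fan-in $d\ge2$. Its leaves carry $\pi\eta(w_i)$ and its internal nodes carry ordered products in $G$. Proposition~\ref{prop:grouplower} then yields a substitution changing at least $\lceil\log_d n\rceil+1$ labels. Since $d$ is a constant, this is $\Omega(\log n)$.

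The one point to state carefully is the interpretation of ``bounded-degree''. If degree means total degree rather than fan-in, the fan-in is still at most the degree, so the proposition applies with $d$ equal to the degree bound. I will also note that edit stability counts all changed cells. Any extra cells a scheme carries beyond the exposed quotient labels can only increase the count, so the lower bound on the projected labels transfers to $\edit(n)$.

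\textbf{Matching upper bound.} Theorem~\ref{thm:universal}, applied to the recogniser $(\mathbb Z/2\mathbb Z,\eta,F)$, gives a balanced product annotation with $\edit(n)\le\lceil\log_2 n\rceil+\bigO(1)$. This annotation is itself a fan-in-$2$ decomposition that exposes the quotient labels, so it lies inside the class being considered. Therefore the bound is attained up to constants. By the proposition, its worst case is exactly the length of its deepest root path.

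\textbf{Minimality.} The word ``minimal'' in the title is informal: $\mathbb Z/2\mathbb Z$ is the smallest nontrivial group. I will remark on this without proving any formal minimality claim. I expect no real obstacle here; the only care needed is the degree-versus-fan-in bookkeeping above.
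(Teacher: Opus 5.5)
Your proposal is correct and follows the paper's route: the corollary is obtained by instantiating Proposition~\ref{prop:grouplower} with $G=\mathbb Z/2\mathbb Z$ and a fixed fan-in bound $d$, which gives $\lceil\log_d n\rceil+1=\Omega(\log n)$ changed labels. The matching upper bound comes from Theorem~\ref{thm:universal} applied to the balanced binary product annotation, which itself belongs to the class considered; your remarks on degree versus fan-in and on extra cells only increasing the count are harmless bookkeeping that the paper leaves implicit.
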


This lower bound does not rule out non-product annotations, nor product annotations
that hide the quotient inside additional side information.  It says that the usual
associative-product geometry cannot asymptotically improve on balanced height when the
scheme commits to carrying the edited group quotient as projected product labels.  That
distinction is essential: the broader constant-stability question is a question about
all bounded-arity annotation schemes, not only about segment trees.

\begin{remark}[Relation to Merkle association]
\label{rem:merkle}
The construction has the shape of a Merkle tree, but the label operation is exact
monoid multiplication rather than a hash.  Consequently the soundness argument is
information--theoretic: an accepting annotation is forced to contain the true monoid
value of every interval.  As a computation pattern it is the standard parallel--prefix
association of an associative product, classically studied by Ladner and Fischer
\cite{LadnerFischer80}.  Its tree shape resembles Merkle's authenticated hash tree,
but here the labels are exact algebraic products rather than cryptographic hashes
\cite{Merkle87}.
\end{remark}

\begin{theorem}[Prefix versus balanced association]
\label{thm:separation}
For parity, the prefix annotation has stability $\Theta(n)$, while the balanced product
annotation has stability $\bigO(\log n)$.  Hence edit stability is a property of the
chosen annotation geometry, not merely of the recognised language or the recognising
monoid.
\end{theorem}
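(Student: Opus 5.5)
The theorem combines two results already established, so the plan is to invoke them and then draw the conclusion.

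\textbf{Prefix bound.} For the prefix annotation, Lemma~\ref{lem:prefix} already gives the upper bound $\edit(n)\le n+1$ for every recognised language. For parity over $\{a,b\}$ with $\eta(a)=1$, $\eta(b)=0$ in $\mathbb Z/2\mathbb Z$, the same lemma supplies the matching witness. Take $w=b^n$ and substitute the first letter by $a$. Then every label $\gamma(i)$ with $1\le i\le n$ flips, so $n$ cells change, and together the two bounds give $\Theta(n)$. The only care needed is to stress that this is a worst case over all edges of $Q_n$. Since $\edit(n)$ is defined as a maximum, one bad edge suffices for the lower bound.

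\textbf{Balanced bound.} For the balanced product annotation over the same recogniser, Theorem~\ref{thm:universal} gives $\edit(n)\le\lceil\log_2 n\rceil+\bigO(1)=\bigO(\log n)$. Optionally, Corollary~\ref{prop:productlower} shows this is tight, giving $\Theta(\log n)$. Tightness is not needed for the separation, but it makes the comparison sharp.

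\textbf{Conclusion.} The concluding sentence follows because both annotations share three things: the language (parity), the recogniser $(\mathbb Z/2\mathbb Z,\eta,\{1\})$, and the label alphabet $\Lambda=M$. Yet $n/\log n\to\infty$, so for large $n$ the two stabilities differ. Edit stability therefore cannot be a function of $(L,M,\eta)$ alone; it depends on the chosen geometry, i.e. the cell set and the rule family. The only delicate point is phrasing: the claim is that stability is a property of the scheme and its canonical section, not that parity has any intrinsic stability value. No step presents a real obstacle, since everything reduces to Lemma~\ref{lem:prefix} and Theorem~\ref{thm:universal}.
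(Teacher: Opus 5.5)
Your proposal is correct and follows the paper's proof. Both cite the same two results: the $\Theta(n)$ bound from Lemma~\ref{lem:prefix} (upper bound $n+1$ plus the witness $b^n\to ab^{n-1}$) and the $\bigO(\log n)$ bound from Theorem~\ref{thm:universal} with $M=\mathbb Z/2\mathbb Z$. One cosmetic point: the paper's parity language is the even one, so $F=\{0\}$ rather than $\{1\}$, but neither canonical annotation depends on $F$, so nothing changes.
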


\begin{proof}
The lower bound for the prefix annotation is Lemma~\ref{lem:prefix}.  The logarithmic
upper bound for the balanced annotation is Theorem~\ref{thm:universal} applied to
$M=\mathbb Z/2\mathbb Z$.
\end{proof}

\section{The annotation--free boundary}
\label{sec:frontier}

A scheme is annotation--free when $|\Lambda|=1$.  Such a scheme has no
informative cells; all work is done by the input windows read by the local rules.  To
avoid encoding nonlocal information in the choice of the rule family, this section uses
the standard bounded--window convention: rules are translates of finitely many
predicates on contiguous windows, with separate finitely many boundary predicates.

\begin{definition}[Bounded--window recognition]
\label{def:window}
A bounded--window recogniser of width $k$ consists of a set $P\subseteq\Sigma^{k}$ of
allowed factors, a set $B_{\ell}\subseteq\Sigma^{k-1}$ of allowed prefixes, a set
$B_r\subseteq\Sigma^{k-1}$ of allowed suffixes, and finitely many exceptional decisions
for words of length $<k$.  A word $w$ of length at least $k-1$ is accepted iff its
length--$(k-1)$ prefix lies in $B_\ell$, its length--$(k-1)$ suffix lies in $B_r$, and
every length--$k$ factor of $w$ lies in $P$.
\end{definition}

\begin{proposition}[Strict locality is exactly annotation--free bounded--window
recognition]
\label{prop:slt}
A regular language has an annotation--free bounded--window recogniser if and only if it
is strictly locally testable.  In that case $\size(n)=0$, $\edit(n)=0$, and a single
substitution affects only $\bigO(k)$ window rules.
\end{proposition}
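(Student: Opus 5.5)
The plan is to show that both sides of the equivalence are literally the same definition, and then read the costs off the window structure. The paper's definition of strict $k$--locality says that membership of sufficiently long words is determined by a set of allowed length--$k$ factors together with allowed length--$(k-1)$ prefixes and suffixes. Definition~\ref{def:window} has exactly this shape, except that it also lists finitely many exceptional decisions for short words. So the proof will be a translation in each direction, plus bookkeeping for short words and for the mismatch between lengths $k-1$ and $k$ at the boundary.

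For the forward direction, take an annotation--free bounded--window recogniser of width $k$ with data $(P,B_\ell,B_r)$ and the exceptional decisions. Every word of length at least $k-1$ is then accepted exactly when its prefix, suffix and factor conditions hold, which is the strictly $k$--local condition. Words of length below $k-1$ form a finite set, so they fall under the ``sufficiently long'' clause. If one insists on a uniform strictly local description, I would raise the width to some $k'$ exceeding every short length. I would then encode the finitely many exceptional short words through $B_\ell,B_r$ at width $k'$, using the fact that a word of length exactly $k'-1$ is its own prefix and suffix. This shows $L\in\St$.

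For the converse, take $L$ strictly $k$--local with its defining sets. Set $P$ to be the allowed factors and $B_\ell,B_r$ the allowed boundary words. Take as exceptional decisions the membership values of $L$ on the finitely many words below the threshold length from which the strict-locality description applies, enlarging $k$ if needed so that this threshold is at most $k-1$. I expect the main obstacle to be this threshold alignment between the two conventions. It amounts to checking that enlarging $k$ preserves strict locality: a $k$-factor condition is implied by the $(k+1)$-factor condition obtained by taking all $(k+1)$-words whose $k$-factors are allowed, and the boundary sets are enlarged similarly. Once that is checked, the two descriptions accept the same words at every length.

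For the costs, $|\Lambda|=1$ means every annotation is the unique constant map. Hence $\size(n)=|I(n)|\cdot\lceil\log_2 1\rceil=0$, and the canonical section is constant, so $\edit(n)=0$. A substitution at position $i$ changes the truth only of the windows containing $i$. These are at most $k$ factor windows plus possibly the prefix and suffix rules, so $\rev(n)\le k+2=\bigO(k)$.
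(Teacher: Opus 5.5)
Your proposal is correct and follows the paper's route. The paper's proof simply observes that Definition~\ref{def:window} is the usual definition of strict local testability, with the finitely many short words handled separately, and then reads off $\size(n)=\edit(n)=0$ and the $\bigO(k)$ bound on affected windows; your width-raising argument makes explicit the threshold alignment that the paper leaves implicit. The one weak spot is your optional aside about encoding the exceptional short words into $B_\ell,B_r$ at a larger width $k'$: this cannot handle words shorter than $k'-1$, and even at length exactly $k'-1$ it can conflict with the prefix and suffix constraints those sets impose on longer words, but nothing depends on it, since strict locality only constrains sufficiently long words.
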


\begin{proof}
Definition~\ref{def:window} is precisely the usual definition of strict local
testability, with the finite set of short words handled separately.  The associated
checker has no informative annotation cells, so the size and stability are zero.  A
substitution at position $i$ can affect only windows starting in
$\{i-k+1,\dots,i\}$, plus the two boundary tests if $i$ is near an endpoint.  Thus the
incremental revalidation cost is bounded by a constant depending only on $k$.
\end{proof}

The next proposition records the cleanest separation between trusted maintenance and
annotation--free local recognition.

\begin{proposition}[Trusted constant update does not imply annotation--free locality]
\label{prop:gap}
There exists a regular language with constant trusted dynamic update time under
substitutions but with no annotation--free bounded--window recogniser.
\end{proposition}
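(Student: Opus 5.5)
The witness will be parity over $\{a,b\}$: the set of words with an even number of $a$'s, recognised by $\eta:\{a,b\}^*\to\mathbb Z/2\mathbb Z$ with $\eta(a)=1$, $\eta(b)=0$ and $F=\{0\}$. The proof has two halves. The first is a trusted dynamic structure with constant update time under substitutions. The second shows that no annotation-free bounded-window recogniser exists, which by Proposition~\ref{prop:slt} is the same as showing that parity is not strictly locally testable.

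For the upper bound, the structure stores the input word together with one global bit $p=\eta(w)$. A substitution $\mathrm{sub}_i^c(w)$ reads the old symbol $w_i$, writes $c$, and updates
\[
  p \leftarrow p+\eta(w_i)+\eta(c)\pmod 2 .
\]
This costs $\bigO(1)$ operations, and membership is answered by testing $p=0$. The same argument works for any language recognised by a finite abelian group, and more generally within the $\ZG$ classification of \cite{AJP21}. Parity is enough for the existence statement.

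For the lower bound, suppose parity had a bounded-window recogniser of width $k$ with data $(P,B_\ell,B_r)$. Take $N\ge k$ and the two words $u=b^{N}$ and $v=a\,b^{N}$, which have $0$ and $1$ occurrences of $a$ respectively, so $u\in L$ and $v\notin L$. This does not yet give a contradiction, since $v$ has different factors. I will instead use a pumping argument on factor sets. Consider
\[
  x=b^{N}\,a\,b^{N}\,a\,b^{N}\in L,\qquad y=b^{N}\,a\,b^{N}\in L .
\]
Then $y\,b^{N}a\,b^{N}a\,b^N$ is not usable, so I use the standard strict-locality cut-and-paste property. If $w_1=x_1 z y_1$ and $w_2=x_2 z y_2$ are both accepted, where $|z|\ge k-1$, then $x_1 z y_2$ is accepted. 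This holds because every length-$k$ window of $x_1zy_2$ lies inside $x_1z$ or inside $zy_2$, and the prefix and suffix come from $w_1$ and $w_2$ respectively.

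Apply this with $z=b^{N}$, $N\ge k-1$, to the accepted words
\[
  w_1=b^{N}a\;b^{N}\;a b^{N},\qquad w_2=b^{N}\;b^{N}\;b^{N}.
\]
Splitting $w_1$ at its middle block and $w_2$ at its middle block produces $x_1zy_2=b^{N}a\,b^{N}\,b^{N}$. This word contains exactly one $a$, so it is odd and lies outside $L$, yet the recogniser accepts it. That contradiction shows parity is not strictly locally testable.

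The only delicate point is checking that the cut-and-paste property matches Definition~\ref{def:window} exactly, including the length-$(k-1)$ boundary tests and the finitely many exceptional short words. Taking $N\ge k$ makes every word involved longer than $k$, so the exceptional words play no role. The prefix of $x_1zy_2$ equals the prefix of $w_1$, and its suffix equals the suffix of $w_2$. Both lie in the allowed sets because $w_1$ and $w_2$ are accepted. I expect this bookkeeping to be the only real obstacle; everything else is immediate.
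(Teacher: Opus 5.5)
Your proof is correct. The trusted-update half is the paper's own: the same witness ($|w|_a$ even) and a single stored parity bit toggled by $\eta(w_i)+\eta(c)$.

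For the lower half you take a different route.

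\textbf{The paper's route.} It argues algebraically. The syntactic monoid of parity is $\mathbb Z/2\mathbb Z$, a nontrivial group, so $L$ is not aperiodic. By the classical inclusion $\Lt\subseteq\Ap$ it is then not locally testable, hence not strictly local, and Proposition~\ref{prop:slt} finishes.

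\textbf{Your route.} You refute a width-$k$ recogniser $(P,B_\ell,B_r)$ directly by the cut-and-paste property with overlap $z=b^N$, $N\ge k$. Splicing $w_1=b^Na\,b^N\,ab^N$ with $w_2=b^N\,b^N\,b^N$ gives the odd word $b^Na\,b^{2N}$. Your window and boundary bookkeeping is right. No length-$k$ window can meet both $x_1$ and $y_2$ when $|z|\ge k-1$. The length-$(k-1)$ prefix and suffix lie inside $x_1z$ and $zy_2$ respectively, so they coincide with those of $w_1$ and $w_2$.

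\textbf{What each buys.} Your argument is self-contained and matched exactly to Definition~\ref{def:window}. It needs neither Proposition~\ref{prop:slt} nor the cited Sch\"utzenberger/Brzozowski--Simon theory, and it proves only the non-strict-locality the statement requires. The paper's route is a one-liner given those classical facts. Its lower half also applies verbatim to every regular language whose syntactic monoid contains a nontrivial group, and it yields the stronger conclusion that such languages are not even locally testable.

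\textbf{Write-up.} The exploratory sentences about $u=b^N$, $v=ab^N$ and the ``not usable'' concatenation are dead ends and should be cut. The remark about finite abelian groups and $\ZG$ is fine but plays no role in the existence statement.
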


\begin{proof}
Let $L=\{w\in\{a,b\}^{*}: |w|_a\text{ is even}\}$.  A trusted dynamic structure stores
$|w|_a\bmod 2$ and toggles it exactly when the substitution changes whether the edited
symbol is $a$.  This is constant time.  Algebraically, the syntactic monoid is
$\mathbb Z/2\mathbb Z$, a nontrivial group.  Therefore $L$ is not aperiodic and hence
not locally testable.  By Proposition~\ref{prop:slt}, it has no annotation--free
bounded--window recogniser.
\end{proof}

\begin{problem}[The constant--stability boundary and its closure properties]
\label{prob:boundary}
Classify the regular languages admitting a perfectly sound bounded--arity annotation
scheme with $\edit(n)=\bigO(1)$ and $\rev(n)=\bigO(1)$.  Is this class exactly
$\St$, does it contain any locally testable language not in $\St$, and is it closed
under the operations needed for an algebraic variety theorem?
\end{problem}

The last clause is essential.  The set of languages admitting a constant-stability
scheme is not known to be closed under inverse morphisms, quotients, Boolean
operations, or finite products.  Lemma~\ref{lem:quotientannotation} proves only that
the canonical product annotation commutes with monoid quotients.  It does not prove
closure of the abstract resource class, because a general scheme may use non-product
cells and may have many accepted annotations for the same word.  Thus it would be
premature to ask whether the class is a variety strictly between $\Lt$ and $\Ap$
before these closure properties are established.

There is, however, a clean closure statement for the two-sided version of the model.
It is useful because it identifies exactly where the one-sided existential definition
loses the algebraic bookkeeping required by Eilenberg-style classifications.

\begin{definition}[Two-sided total local decision scheme]
\label{def:twosided}
A two-sided total scheme for $L$ is a bounded--arity local checker
$C(w,\gamma,b)$ with a claimed output bit $b\in\{0,1\}$ such that, for every word
$w$ and bit $b$,
\[
  \exists\gamma\; C(w,\gamma,b)=1
  \quad\Longleftrightarrow\quad
  b=\mathbf 1_L(w).
\]
A canonical section chooses an accepted annotation for the unique correct bit of
each word, and its costs are measured as in Definition~\ref{def:costs}.  The
claimed bit is a constant-size root value and does not affect asymptotic cell
movement.
\end{definition}

The balanced product annotation is two-sided total: the local product constraints
force the root monoid value, and the root rule checks whether the claimed bit agrees
with membership in $F$.

\begin{proposition}[Closure for two-sided total schemes]
\label{prop:twosidedclosure}
The class of languages with two-sided total schemes of bounded locality and
constant edit stability is closed under Boolean operations and under left and right
quotients by fixed words.  It is also closed under inverse images of fixed uniform
morphisms in the fixed-length substitution model.  If the schemes are measured in the
aligned edit-trace model, then closure holds under inverse images of arbitrary fixed
morphisms.  The same statements hold with logarithmic stability in place of constant
stability.
\end{proposition}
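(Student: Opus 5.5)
The plan is to prove each closure property by an explicit construction of a new two-sided total scheme from the given ones, and then to bound the new scheme's locality and cell movement by those of the ingredients. Throughout we use the defining feature of Definition~\ref{def:twosided}: for every word $w$ and every bit $b$, an accepted annotation exists exactly when $b=\mathbf 1_L(w)$. Consequently every word, member or not, carries an accepted annotation certifying its correct bit. That is what makes complement and combination possible; the one-sided existential model would lose this.

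\emph{Boolean operations.} For complement, keep the checker and flip the claimed bit, setting $C'(w,\gamma,b)=C(w,\gamma,1-b)$. For $L_1\cap L_2$ or $L_1\cup L_2$, take the disjoint union of the two cell sets $I_1(n)\sqcup I_2(n)$ over the product alphabet $\Lambda_1\sqcup\Lambda_2$, padding each side to a common alphabet. The local rules are the union of both rule families. The new root rule reads both root scopes, guesses the bits $b_1$ and $b_2$ inside its bounded scope, checks both old root rules, and checks that $b=b_1\wedge b_2$ (respectively $b_1\vee b_2$). Because each old scheme has exactly one satisfiable bit per word, the new scheme has exactly one, namely the Boolean combination. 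Locality at most doubles, and the canonical section is the pair of canonical sections, so its cell movement is at most $\edit_1(n)+\edit_2(n)$. Revalidation adds in the same way.

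\emph{Quotients.} For $u^{-1}L$ with $|u|=k$, the scheme for words $w$ of length $n$ simulates the scheme of $L$ at length $n+k$ on $uw$. The input symbols of the fixed prefix $u$ are hardwired into the predicates of the rules that read them, and the rules that read position $j$ of $uw$ are re-indexed to read position $j-k$ of $w$. A substitution in $w$ is a substitution in $uw$, so movement is preserved. Right quotients $Lu^{-1}$ are handled symmetrically. Only finitely many short lengths need exceptional handling.

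\emph{Inverse morphisms.} Let $h$ be a uniform morphism with $|h(a)|=m$ for every letter $a$. A word $w$ of length $n$ maps to $h(w)$ of length $mn$, and a substitution at position $i$ of $w$ changes only the block $h(w_i)$. That block is a product of at most $m$ substitutions in $h(w)$. A rule that reads position $j$ of $h(w)$ is rewritten to read position $\lceil j/m\rceil$ of $w$ and to compute the required letter of $h$ of that symbol, which keeps arity at most $r$. Movement is at most $m\cdot\edit_L(mn)$, which is constant, or logarithmic in the logarithmic version. For an arbitrary fixed $h$, the image length varies and so cell identities do not persist across edits. This is exactly why the aligned edit-trace model is needed: there, cells are indexed along the trace of $h(w)$, and an edit in $w$ replaces one block of bounded length by another. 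The movement bound is then a bounded number of aligned edits times the aligned stability.

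\emph{The main obstacle.} I expect this last step to be the hard one. It requires checking that the aligned model's movement measure is subadditive along a bounded sequence of aligned edits, and that its re-indexing of cells is compatible with composing those edits. The logarithmic variants follow verbatim, since $\log(mn)=\bigO(\log n)$ and sums of boundedly many $\bigO(\log n)$ terms remain $\bigO(\log n)$.
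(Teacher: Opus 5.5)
Your proposal follows the paper's proof essentially step for step. Both use a disjoint-union certificate whose root rule combines the component bits, simulation on the padded word $xwy$ for quotients, block-offset simulation plus the triangle inequality over at most $k$ intermediate substitutions for $k$-uniform morphisms, and, for arbitrary morphisms in the aligned model, a bounded edit trace of length at most $|h(a)|+|h(b)|$. The subadditivity you flag as the main obstacle is used implicitly by the paper and is immediate: under a composite alignment $\beta\circ\alpha$, each unmatched or mislabelled cell can be charged injectively to an unmatched or mislabelled cell of $\alpha$ or of $\beta$, so aligned movement costs add along the trace.
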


\begin{proof}
For a Boolean combination $f(L_1,\dots,L_m)$ over the same input alphabet, take the
disjoint union of the annotation cells for the $m$ schemes.  The combined certificate
contains claimed bits $b_1,\dots,b_m$, locally checks each component certificate, and
has a root rule requiring $b=f(b_1,\dots,b_m)$.  Exactness follows because each
component has a certificate for exactly its true bit.  A canonical section is the
product of the component sections, so edit movement and revalidation add over the
fixed number of components.  Thus constant bounds remain constant and logarithmic
bounds remain logarithmic.

For a fixed two-sided quotient $x^{-1}Ly^{-1}=\{w:xwy\in L\}$, simulate the scheme for
$L$ on the word $xwy$.  The added context has constant length, and a substitution in
$w$ changes one position of the simulated word.  The annotation size changes by only a
constant boundary term, while stability and revalidation are bounded by the
corresponding costs for $L$ at length $|w|+|x|+|y|$.

Finally let $h:\Delta^*\to\Sigma^*$ be $k$-uniform.  On input
$u\in\Delta^n$, simulate the scheme for $L$ on the expanded word $h(u)$ of length
$kn$.  Every local rule reading an output symbol of $h(u)$ is implemented by reading
the corresponding input letter and the fixed offset inside its $k$-letter block.  A
single substitution in $u$ changes at most $k$ output positions.  Moving between the
two expanded words by these $k$ substitutions and applying the triangle inequality
gives edit cost at most $k\,\edit_L(kn)$ and revalidation cost at most
$k\,\rev_L(kn)$, up to a constant factor depending on $k$ and the local arity.

For an arbitrary fixed morphism $h$, let
$K=\max_{a\in\Delta}|h(a)|$.  A substitution $a\mapsto b$ in the input replaces the
block $h(a)$ by the block $h(b)$ in the simulated word.  In the aligned edit-trace
model this replacement is a bounded sequence of at most $|h(a)|+|h(b)|\le 2K$
insertions, deletions, and substitutions, with the surrounding blocks aligned
identically.  Applying the target scheme's aligned-edit bound along this constant
length edit trace gives the claimed closure.  Erasing letters are covered by the same
argument, because deleting the empty block or inserting it contributes no simulated
positions.
\end{proof}

Proposition~\ref{prop:twosidedclosure} is intentionally not stated for arbitrary
one-sided schemes.  In the original model, a nonmember need not have any accepted
annotation, so complementing or combining languages may require certificates for
negative information that the scheme never provided.  The proposition also explains
the exact status of non-uniform inverse morphisms.  They are not an additional algebraic obstacle:
they are ordinary inverse morphisms once the metric is the aligned string-edit metric.
They fall outside the fixed-length substitution graph only because one input
substitution can become a bounded mixture of substitutions, insertions, and deletions
in the expanded word.

The universal tree gives the $\bigO(\log n)$ ceiling for all regular languages, while
Proposition~\ref{prop:slt} gives the zero--annotation part of the constant side.  The
intermediate locally testable languages are subtle.  A language such as ``some factor
from a fixed set occurs'' is locally testable, but exact local verification must prevent
both false absence and false presence of the relevant factor.  A single global flag is
not locally forced; a naive distributed mark for occurrence behaves like a path to a
marked position and can be unstable.  Thus the upper side beyond strict locality is not
a formality.  The obstruction may be algebraic, combinatorial, or a mixture of both.

The expected source of lower bounds outside the constant-stability class is the same kind of
long-distance dependence that appears in dynamic membership: when the syntactic monoid
contains nontrivial group behaviour, local updates must in some form transport prefix
information over long distances.  Proposition~\ref{prop:productlower} proves this only
inside product-decomposition schemes, so the general lower-bound problem remains open.

\section{Finite obstruction search}
\label{sec:finite}

This section turns Problem~\ref{prob:boundary} into a finite problem at each fixed
scale.  The point is not to replace an asymptotic proof, but to make small stable
schemes falsifiable by exhaustive finite checks over the Cayley table of the monoid.

Fix a finite alphabet $\Sigma$, a recogniser $(M,\eta,F)$, an arity bound $r$, a label
alphabet $\Lambda$, a stability budget $s$, and a maximum length $N$.  A \emph{template}
for lengths at most $N$ consists of cell sets $I(n)$, local rule scopes, and root scopes
for $1\le n\le N$, all of arity at most $r$.  The unknowns are the local predicates, the
root predicates, and canonical annotations $\gamma_w$ for all $w\in\Sigma^n$,
$n\le N$.

\begin{lemma}[Finite template feasibility]
\label{lem:finite}
For fixed parameters $(\Sigma,M,\eta,F,r,\Lambda,s,N)$ and fixed scopes, the existence
of a perfectly sound canonical annotation family with stability at most $s$ for all
lengths $\le N$ is decidable by a finite constraint system.
\end{lemma}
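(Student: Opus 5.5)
The plan is to show that every unknown in the statement ranges over a finite set, so that a search over all assignments decides feasibility; then to write the search as a Boolean constraint system. Fix the parameters and scopes. For each length $n\le N$ the cell set $I(n)$ is finite. Each local rule scope names at most $r$ cells and at most $r$ input positions, so its predicate is a subset of $\Lambda^{\le r}\times\Sigma^{\le r}$. There are finitely many such subsets, and likewise finitely many root predicates on $\Lambda^{\le r}$. The canonical annotations $\gamma_w\in\Lambda^{I(n)}$ are indexed by the finitely many words $w\in\Sigma^n$ with $n\le N$. The whole candidate space is therefore a finite product of finite sets. I would encode it with one Boolean variable per (rule, local assignment) pair, saying the assignment is accepted. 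I would also add one variable per (word, cell, label) triple, with exactly-one constraints on each cell, for the canonical section.

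Next I would state each requirement as a finite constraint on these variables.
\begin{enumerate}[label=(\roman*),leftmargin=2em,itemsep=2pt]
  \item \emph{Completeness.} For every $w\in L$ with $|w|\le N$, all local rules and the root rule accept $\gamma_w$. Each rule contributes one clause linking the label variables of its scope to the corresponding predicate variable.
  \item \emph{Stability.} For each substitution edge $ww'$ of $Q_n$, the number of cells $j$ with $\gamma_w(j)\ne\gamma_{w'}(j)$ is at most $s$. This is a cardinality constraint over finitely many disequality indicators, i.e.\ a linear integer inequality.
  \item \emph{Perfect soundness.} For every $w\notin L$ with $|w|\le N$ and \emph{every} $\gamma\in\Lambda^{I(n)}$, some rule rejects $(w,\gamma)$.
\end{enumerate}
The quantifier in (iii) is universal, but it ranges over the finite set $\Lambda^{I(n)}$. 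It therefore unfolds into finitely many clauses. Each clause states that at least one of the predicate variables evaluated at the restriction of $\gamma$ to its scope is false.

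The conjunction of (i)--(iii) is a finite Boolean/integer system. Its satisfying assignments correspond exactly to choices of predicates and canonical sections meeting the definition of a perfectly sound scheme (Definition~\ref{def:scheme}) with $\edit(n)\le s$ for all $n\le N$ (Definition~\ref{def:costs}). Satisfiability of a finite system is decidable by exhaustive enumeration, which gives decidability. The correspondence is checked directly in both directions: a scheme yields an assignment, and an assignment read back as predicates and sections yields a scheme.

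The main obstacle I expect is the soundness clause (iii). It is the only place where the existential definition of acceptance gets negated, and it produces an exponential number $|\Lambda|^{|I(n)|}$ of clauses. Decidability is unaffected, since the number is finite, but one must make sure nonmembers are quantified over all annotations, not just the canonical one. Otherwise the system would encode something weaker than perfect soundness. A smaller point needs care here: the same canonical map is also defined on nonmembers, and it enters the stability constraints for edges that cross $L$. No acceptance requirement is imposed on $\gamma_w$ for $w\notin L$, which matches the convention stated after Definition~\ref{def:costs}.
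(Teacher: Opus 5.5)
Your proposal is correct and follows the paper's proof essentially step for step. Both arguments use finitely many words, cells, labels and truth-table predicates, require completeness on members, unfold soundness over all annotations of each nonmember, and add a cardinality constraint on each substitution edge. The only difference is cosmetic: the paper also requires the canonical annotation of a nonmember to be rejected, and you rightly omit this because it already follows from the universal soundness clauses.
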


\begin{proof}
There are finitely many words of length at most $N$, finitely many annotation cells,
and finitely many possible labels.  Thus the values $\gamma_w(j)$ are finite variables.
Each local predicate is a finite truth table over at most $r$ labels and $r$ input
symbols; each root predicate is another finite truth table.  Completeness and soundness
are finite Boolean constraints: for every word $w$, the conjunction of the selected
truth--table entries on $(w,\gamma_w)$ must equal the truth value of $w\in L$, and for
every nonmember $w$ and every possible annotation $\gamma$ the same conjunction must be
false.  The stability bound adds, for every substitution $w\to w'$, a cardinality
constraint
\[
  \left|\{j:\gamma_w(j)\ne\gamma_{w'}(j)\}\right|\le s .
\]
All quantification is over finite sets, so the resulting feasibility question is a
finite decision problem.
\end{proof}

\begin{remark}[Use of the finite formulation]
The lemma gives a disciplined way to test proposed lower--bound statements before
attempting an asymptotic proof.  For example, one can fix parity, arity $r$, stability
$s=1$ or $2$, and increasingly large $N$, then ask whether any bounded--scope template
survives.  Unsatisfiable finite instances do not by themselves prove the asymptotic
statement, but they can expose the minimal forbidden patterns that a proof must
explain.
\end{remark}

\paragraph{A concrete encoding.}
For fixed scopes, the feasibility problem can be written as a finite Boolean instance.
Use variables $X_{w,j,\lambda}$ to state that the canonical label of cell $j$ on word
$w$ is $\lambda$, and add exactly--one constraints over $\lambda\in\Lambda$.  A local
predicate table contributes variables for each possible tuple of read symbols and
labels.  Completeness and soundness are then direct implications from the selected
table entries.  Stability constraints compare the variables for adjacent words in the
substitution graph on $\Sigma^n$ and enforce a Hamming bound.  The same encoding can be
made integer-valued by replacing truth-table variables with $0$--$1$ variables and
using linear cardinality constraints.

\paragraph{Path-scope CP-SAT instances.}
The ancillary CP-SAT implementation fixes one useful scope family and leaves the
predicates arbitrary.  There are cells $0,\dots,n$, a start predicate on cell $0$, a
uniform transition predicate on triples
\[
  (\gamma(i),w_{i+1},\gamma(i+1)),
\]
and a root predicate on cell $n$.  This is the proof-labeling version of an NFA run:
the checker is nondeterministic through the certificate, not restricted to a
deterministic transition function.  For every member word, the canonical labels must
satisfy the selected predicates.  For every nonmember word, every possible label
sequence is explicitly forbidden.  Stability is imposed by cardinality constraints on
the $X_{w,j,\lambda}$ variables over substitution edges.  Each member word, nonmember
word, and substitution edge is guarded by an assumption literal; when the instance is
unsatisfiable, the solver returns a small forbidden pattern of words and edits.

The following table gives the resulting small-grid evidence.  For parity, budget
$s=1$ fails already at $n=4$ for $q=2,3,4$, while $s=2$ is feasible at $n=4,5$ for the
same label alphabets.  Thus, in this path-scope family, the first obstruction is the
stability budget rather than the number of labels.  For the locally-testable but
non-strictly-local language ``contains $11$'', the boundary is more delicate: at
$n=5$, $q=3,4$ become feasible when the budget is raised from $1$ to $2$, whereas at
$n=6$ all tested cases with $q\le4$ and $s\le2$ are infeasible.  These finite failures
do not settle Problem~\ref{prob:boundary}, but they test the locally testable
case rather than a product annotation for it.

\begin{center}
\centering
\scriptsize
\begin{tabular}{@{}llrrrr@{}}
\toprule
Language & Lengths & $q$ & $s=1$ & $s=2$ & Core size\\
\midrule
parity & $n=4,5$ & 2,3,4 & UNSAT & SAT & 5 at $s=1$\\
contains $11$ & $n=5$ & 3,4 & UNSAT & SAT & 5 at $s=1$\\
contains $11$ & $n=6$ & 3 & UNSAT & UNSAT & 5 at $s=1,2$\\
contains $11$ & $n=6$ & 4 & UNSAT & UNSAT & 53/62\\
\bottomrule
\end{tabular}
\par\smallskip
\begin{minipage}{0.88\linewidth}
\scriptsize
\textbf{Table 1.}
CP-SAT feasibility grid for uniform path scopes with arbitrary transition
predicates.  SAT means a perfectly sound path-scope certificate with the stated
canonical stability budget exists at the fixed lengths.  UNSAT core sizes are
assumption cores over member-word, nonmember-word, and substitution-edge constraints;
the $q=4,n=6$ cores are sufficient but not deletion-minimal.
\end{minipage}
\end{center}

One deletion-minimal parity core at $n=4,q=4,s=1$ consists of the member words
$0000$ and $1001$, the nonmember word $0001$, and the substitution edges
$0000\leftrightarrow1000$ and $1000\leftrightarrow1001$.  One deletion-minimal core
for ``contains $11$'' at $n=5,q=3,s=1$ consists of the nonmember word $01010$, the
member words $11010$ and $01011$, and the substitution edges
$11010\leftrightarrow11011$ and $01011\leftrightarrow11011$.  These cores are finite
forbidden substitution patterns, not histograms.

\paragraph{A bounded-width ladder scope.}
The path family has one label per cut.  To test whether the obstruction is specific to
that one-track geometry, the ancillary library also includes a
bounded-width ladder encoding.  A width-$t$ instance has $t$ cells in each column and
one arbitrary local predicate on
\[
  (\gamma_i^{(1)},\dots,\gamma_i^{(t)},\; w_{i+1},\;
    \gamma_{i+1}^{(1)},\dots,\gamma_{i+1}^{(t)}).
\]
Nonmember soundness is still exact: every assignment to all ladder cells of a
nonmember word is enumerated and rejected.  Width $1$ is the path-scope instance above;
width $2$ is a bounded-degree scope hypergraph rather than a single run.

For the language ``contains $11$'', width $2$ with binary cell labels reproduces the
same budget boundary at $n=5$: $s=1$ is UNSAT with a deletion-minimal five-assumption
core, while $s=2$ is SAT.  At $n=6$, width $2$, $q=2$, and $s=2$ is already UNSAT with a
sufficient core of size $64$.  For parity, the same width-$2$, $q=2$, $s=2$ setting is
SAT at $n=5$ and UNSAT at $n=6$ with a sufficient core of size $44$.  These are still
finite scope-family refutations rather than arbitrary-scheme lower bounds, but they
show that the obstruction is not confined to the single-cell path encoding.

\paragraph{A bounded-span hypergraph scope.}
The adjacent ladder can itself be too restrictive.  A broader constant-degree family
lets one local predicate read column $i$, the two-bit block $w_iw_{i+1}$, and column
$i+2$.  This span-$2$ scope is still finite and exactly sound against every
annotation of every nonmember word, but it is no longer a single path or an adjacent
ladder transition.

For ``contains $11$'', the width-$2$, span-$2$, $q=2$ instance with stability budget
$s=2$ is SAT at $n=6$, whereas the adjacent ladder at the same width and budget is
UNSAT.  Thus one finite obstruction can disappear when the scope hypergraph is widened.
At the next tested length the obstruction returns: the span-$2$ instance is UNSAT at
$n=7$ with a sufficient assumption core of size $45$.  For parity, the same span-$2$,
width-$2$, $q=2$ family is SAT at $n=6$ even for $s=1$.  These certificates show that
path-like finite failures must be tested against bounded-scope enrichments before they
are used as evidence for an asymptotic obstruction.

\paragraph{A deterministic path-local submodel.}
One finite family is small enough to enumerate without a SAT solver.  Fix a label set
of size $q$ and put one annotation cell after each prefix of the word.  A template is a
deterministic transition system: an initial label, one transition function for each
input symbol, and a root accepting set.  Local rules force the unique run
\[
  s_0,\ s_{i+1}=\delta_{w_{i+1}}(s_i),
\]
and the root rule reads $s_n$.  For binary input there are $q^{2q}\cdot q\cdot 2^q$
such templates.  This is a strict submodel of Lemma~\ref{lem:finite}: it fixes the
path scopes, fixes the local predicate shape to deterministic transition equations,
and therefore cannot refute arbitrary bounded-arity annotations.  It is nevertheless a
useful sanity test for the prefix-like explanations that naturally arise from DFA runs.
The A100 certificates in Section~\ref{sec:artifact} exhaust the case $q=4$, $n=13$ for
parity and for the language $|w|_1\equiv0\bmod 3$.  For parity, among
$4{,}194{,}304$ templates, $20{,}856$ recognise the language on every length up to
$13$, and every exact template has worst-case run-annotation movement $13$.  For the
modulo-$3$ language, $1{,}008$ templates are exact up to length $13$, and again every
exact template has worst-case movement $13$.

\paragraph{Section-optimal movement for fixed path recognisers.}
The deterministic and nondeterministic CUDA searches use a specified canonical run.
To separate the choice of checker from the choice of section, a second CP-SAT model
fixes the standard deterministic path recogniser and optimises the canonical annotation
map itself.  For member words the selected labels must form an accepting local run.
For nonmember words the canonical labels are unconstrained, exactly as permitted by
Definition~\ref{def:costs}.  The solver then asks for the minimum $s$ such that every
substitution edge has Hamming movement at most $s$.

\begin{center}
\centering
\scriptsize
\begin{tabular}{@{}lrrr@{}}
\toprule
Fixed path recogniser & $n=8$ & $n=10$ & $n=12$\\
\midrule
parity & 4 & 5 & 6\\
$|w|_1=0\bmod 3$ & 4 & 5 & 6\\
contains $11$ & 7 & 9 & 11\\
\bottomrule
\end{tabular}
\par\smallskip
\begin{minipage}{0.88\linewidth}
\scriptsize
\textbf{Table 2.}
Exact minimum worst-case movement over canonical sections of the fixed standard
deterministic path recogniser.  The checker and scopes are fixed; the section is
optimised.  Nonmember canonical annotations are unrestricted.  These values are not
optima over all local annotation schemes.
\end{minipage}
\end{center}

The table changes the interpretation of the CUDA run statistics.  For the group
languages, a lexicographic or unique locally consistent run can move on all $n$ path
cells, while the section optimum for the fixed recogniser is only $n/2$ at the tested
even lengths.  For ``contains $11$'', the same optimisation still gives $n-1$.  Thus
canonical-section choice is a real degree of freedom, but in this fixed path geometry
the locally-testable example exhibits stronger finite transport than the two group
examples.  No asymptotic conclusion follows without analysing other recognisers and
scope families.

\paragraph{Why finite failures matter.}
A finite unsatisfiable instance is not merely a failed search over one hand-designed
scheme.  It rules out every scheme with the chosen scopes, label alphabet, and stability
budget up to length $N$.  When an unsatisfiable core is small, it identifies a finite
pattern of words and substitutions that any asymptotic lower bound must generalise.
This is particularly useful for separating two possible explanations: failure caused by
too small a label alphabet, and failure caused by the stability budget itself.

\paragraph{Scope enumeration.}
The remaining nonuniformity lies in the choice of scopes.  For bounded arity and fixed
$N$ there are still finitely many choices.  One can therefore enumerate scope hypergraphs
up to isomorphism, or restrict attention to translation--invariant and bounded--degree
families.  The first route is exhaustive but large; the second is less general but
closer to the product and window geometries studied above.  In either case the problem
is finite at every scale.

\paragraph{Monotone obstruction sequences.}
For fixed template shape family $(I(n),\Pi(n))_{n\le N}$, infeasibility is monotone in
three parameters: decreasing the label alphabet, decreasing the stability budget, or
adding more words to the tested length range cannot restore feasibility.  Thus a chain
of unsatisfiable instances at increasing $N$ is not a collection of unrelated negative
experiments; it is evidence for a single obstruction pattern.  The natural proof task is
to identify a bounded set of substitutions whose induced constraints force a path-like
transport of a group or counting value.  Proposition~\ref{prop:grouplower} proves this
path phenomenon for product geometries; the finite search asks whether every exact
constant-stability scheme for an edit-active group quotient must contain such a path in
some encoded form.

\section{Ancillary library}
\label{sec:artifact}

The finite search above is useful only if its outputs are inspectable.  The
ancillary material is therefore organised by the claims it supports.  The Lean~4
branch \cite{deMouraUllrich21}, under \texttt{anc/lean}, has separate modules for the
Boolean presentation of $\mathbb Z/2\mathbb Z$, list-level parity products, and binary
product trees.  The central checked lemma is that flipping a symbol inside any context
flips the product label of that interval.  This is the kernel of the parity instance of
Proposition~\ref{prop:grouplower}: every ancestor interval containing the edited leaf
has a different product label.  The development is independent of
Mathlib, so the trusted surface is the Lean kernel and the pinned Lean~4 toolchain.

\begin{center}
\centering
\scriptsize
\begin{tabular*}{\linewidth}{@{\extracolsep{\fill}}lll@{}}
\toprule
Claim tested & Certificate entry point & Claim strength\\
\midrule
Parity interval products flip under edit & \texttt{anc/lean} & checked proof kernel\\
Product mean separates group/aperiodic rows & \texttt{anc/gpu/materialized} & exact finite profile\\
Path-scope finite feasibility and cores & \texttt{anc/sat/path} & fixed-scope CP-SAT\\
Width-$2$ ladder obstruction & \texttt{anc/sat/ladder} & bounded-degree CP-SAT\\
Span-$2$ hypergraph scope & \texttt{anc/sat/span2} & broader-scope CP-SAT\\
Optimal sections of fixed path checkers & \texttt{anc/sat/section\_optimal} & exact section optimum\\
Deterministic run templates & \texttt{anc/gpu/deterministic\_path} & restricted exhaustive search\\
Nondeterministic path relations & \texttt{anc/gpu/nfa\_path\_relation} & relation-template search\\
Abelian/non-abelian prefix runs & \texttt{anc/gpu/group\_prefix} & canonical-run audit\\
Context-free interval charts & \texttt{anc/gpu/cfg\_chart} & nonterminal/split profile\\
\bottomrule
\end{tabular*}
\par\smallskip
\begin{minipage}{0.88\linewidth}
\scriptsize
\textbf{Certificate map.}
Each entry is tied to a bounded claim.  The CP-SAT and CUDA searches are finite
certificates for named template families; they are not advertised as asymptotic lower
bounds for arbitrary bounded-arity schemes.
\end{minipage}
\end{center}

The SAT branch, under \texttt{anc/sat}, contains the CP-SAT implementations used for
Table~1, the ladder and span-$2$ hypergraph scope tests, and the JSON certificates for
the assumption cores.  This is the branch closest to Lemma~\ref{lem:finite}: the local
predicates are variables, the canonical labels are variables, nonmember soundness is
universally enumerated over all annotations, and stability is a cardinality constraint
over substitution edges.

The GPU branch, under \texttt{anc/gpu}, has six pipelines.  The product-atlas pipeline
materialises batches of canonical balanced-product annotations in VRAM and scans the
substitution edges induced by those annotations.  This is retained as a deterministic
regression test and as an exact distributional profile for
Proposition~\ref{prop:meanmovement}, not as an experiment that could change
Theorem~\ref{thm:universal}.  A lighter on-the-fly edge-profile scanner is kept as a
cross-check.  The deterministic path-local template search materialises the final-state
table for every template-word pair before filtering exact templates for parity or
modulo $3$.  The nondeterministic path-relation search enumerates arbitrary start
masks, transition relations, and root masks over a small label alphabet; exactness is
checked on every word up to the target length, and movement is then measured for the
lexicographically least accepted local run.  The group-prefix pipeline materialises
canonical prefix runs for $\mathbb Z/3\mathbb Z$ and $S_3$, testing whether the path
movement mechanism depends on commutativity.  The context-free chart pipeline
materialises interval nonterminal charts and canonical accepting-split witnesses for
fixed grammars, then scans terminal substitutions.  It is not a theorem about all
context-free certificates; it is a finite certificate showing that the same search
method can record grammatical relabeling and derivation-choice movement.

The included A100 materialised product certificate was produced on one NVIDIA A100-SXM4
40GB GPU.  For length $n=29$ it materialises batches of $56{,}512{,}727$ words, using
$38{,}654{,}705{,}268$ bytes for the annotation slab.  Across all batches it scans
\[
  2^{29}\cdot 29 = 15{,}569{,}256{,}448
\]
directed one-bit substitution edges.  The materialisation kernels took $9.84$ seconds
in total and the scan kernels took $49.42$ seconds.  The JSON certificate stores the
histograms, root-membership flips, timings, device metadata, memory allocation size,
and 64-bit digests; it does not store the per-edge trace, because the
histograms are the finite object being analysed.

\begin{center}
\centering
\scriptsize
\begin{tabular}{@{}llrrr@{}}
\toprule
Language & Hint & Root flips & Max & Mean\\
\midrule
even number of $1$s & group & 15{,}569{,}256{,}448 & 6 & 5.8966\\
number of $1$s $=0\bmod 3$ & group & 10{,}379{,}504{,}318 & 6 & 5.8966\\
contains $11$ & LT & 34{,}097{,}282 & 6 & 2.8286\\
contains $1$ & LT & 58 & 6 & 1.6256\\
ends with $1$ & SL suffix & 536{,}870{,}912 & 6 & 1.9655\\
avoids $11$ & SL factor & 34{,}097{,}282 & 6 & 2.8286\\
\bottomrule
\end{tabular}
\par\smallskip
\begin{minipage}{0.88\linewidth}
\scriptsize
\textbf{Table 3.} Selected product-annotation profiles from the A100
materialised certificate at $n=29$.  The Max and Mean columns count changed
canonical balanced-product labels under directed one-symbol substitutions; they do
not count the best possible annotation for the language.  Thus strictly local rows can
still have nonzero product changes, while Proposition~\ref{prop:slt} gives
annotation-free schemes with zero annotation stability cost.
\end{minipage}
\end{center}

The table should not be read as evidence that the product annotation is optimal.  Its
maximum is fixed by the tree height, and the group rows necessarily saturate that
height.  Its value is narrower: the mean column shows that worst-case and distributional
movement can diverge.  Aperiodic examples have much smaller average product
movement, and strictly local examples show the modelling distinction most clearly:
the product annotation is not optimal for them, because the annotation-free window
recogniser already has $\edit(n)=0$.

The deterministic path-template certificates are smaller on disk but closer to the
finite-feasibility formulation.  For $q=4$ and $n=13$ each run enumerates every binary
deterministic path-local template with four labels.  Each run materialises
$34{,}359{,}738{,}368$ bytes of final-state data in A100 VRAM.  The parity run has
$20{,}856$ exact templates; the modulo-$3$ run has $1{,}008$.  In both cases every
exact template has worst-case unique-run movement $13$ at length $13$.  The finite
outcome is summarised in Table~4.

\begin{center}
\centering
\scriptsize
\begin{tabular}{@{}llrrr@{}}
\toprule
Target & Family & $q$ & Exact templates & Minimum worst movement\\
\midrule
parity & deterministic path-local & 4 & 20{,}856 & 13\\
$|w|_1=0\bmod 3$ & deterministic path-local & 4 & 1{,}008 & 13\\
\bottomrule
\end{tabular}
\par\smallskip
\begin{minipage}{0.88\linewidth}
\scriptsize
\textbf{Table 4.} A100 finite-template certificates for deterministic path-local
annotations at $n=13$.  The movement column counts changed cells in the unique run
annotation under one-bit substitutions.  This is evidence for the prefix-run
obstruction in this restricted family, not a lower bound for arbitrary bounded-arity
annotation schemes.
\end{minipage}
\end{center}

The nondeterministic path-relation certificate is closer to
Definition~\ref{def:scheme} than the deterministic run search because the local
transition predicate is an arbitrary relation rather than a function.  For $q=3$ the
template has $2q$ bits for the start and root masks and $2q^2$ bits for the two
letter-indexed transition relations, hence $2^{24}=16{,}777{,}216$ candidate local
predicates.  The A100 run materialises a 32\,GiB acceptance table per batch.  For
$n=11$ it finds $2{,}442$ exact parity relations, $6$ exact modulo-$3$ relations, and
$1{,}632$ exact ``contains $11$'' relations.  The same exact relation counts persist at
$n=12$, $n=14$, $n=15$, and $n=16$.  For the chosen lexicographic canonical section, the
minimum worst movement equals the tested length in all three target languages: $11$,
$12$, $14$, $15$, and $16$, respectively.  The $n=16$ run uses $32$ batches and
$891.6$ seconds of evaluation-kernel time.  This is not an optimal-section lower bound:
it is a finite certificate showing that even nondeterministic path-local
predicates have no low-movement lexicographic section on these lengths.

Finally, the group-prefix pipeline materialises the canonical prefix runs for
$\mathbb Z/3\mathbb Z$ and for the non-abelian group $S_3$ at $n=29$.  It uses
$32{,}212{,}254{,}720$ bytes for the two prefix tables and scans
$15{,}569{,}256{,}448$ directed substitutions.  Both groups have maximum movement
$29$ and mean movement $15$.  The identical histogram is consistent with the proof
mechanism in Proposition~\ref{prop:grouplower}: cancellation, not commutativity, is the
reason a changed generator moves every later prefix value.  This is still a canonical
run audit rather than an exhaustive search over arbitrary $S_3$-recognising templates.

The context-free chart certificate is an experiment beyond regular monoids.  It uses
three small Chomsky-normal-form style grammars: nonempty binary palindromes,
$\{0^k1^k:k\ge1\}$, and one-type Dyck parentheses with concatenation.  For every
binary word of length $24$ it materialises the full interval chart.  Each chart cell
stores two bytes: the generated nonterminal-set bitmask and the first split witness for
the accepting nonterminal, or $255$ when that nonterminal is absent.  The A100 run
materialises $30{,}198{,}988{,}800$ bytes and scans $402{,}653{,}184$ directed
substitutions.  The finite outcome is summarised in Table~5.

\begin{center}
\centering
\scriptsize
\begin{tabular}{@{}lrrrr@{}}
\toprule
Grammar & Root flips & Max chart & Mean chart & Mean split\\
\midrule
palindrome & 196{,}608 & 156 & 22.4270 & 12.3376\\
$0^k1^k$ & 48 & 46 & 3.6667 & 1.6049\\
Dyck, one type & 9{,}984{,}576 & 55 & 9.0325 & 5.1549\\
\bottomrule
\end{tabular}
\par\smallskip
\begin{minipage}{0.88\linewidth}
\scriptsize
\textbf{Table 5.} Context-free interval-chart movement at $n=24$.  ``Max chart'' and
``Mean chart'' count changed interval nonterminal-set cells.  ``Mean split'' counts
changed canonical accepting-split witnesses and is a finite surrogate for derivation
choice movement under the fixed interval universe.  It is not an optimal parse-tree
stability theorem.
\end{minipage}
\end{center}

All A100 certificates record compact digests rather than raw traces.  The recorded
environment for the new runs is NVIDIA driver 595.58.03, CUDA toolkit 13.2.78,
NVIDIA A100-SXM4 40GB, persistence mode enabled, compute mode default, MIG disabled,
and zero volatile uncorrected ECC errors.  CUDA reductions use fixed block counts and
integer atomic additions/xors, so the aggregate histograms and 64-bit digests are
deterministic for the given kernel and parameters.

\section{Insertion and deletion edits}
\label{sec:indel}

The paper treats substitutions because the cell set is then fixed and edit stability is
a literal Hamming distance between annotations.  Insertions and deletions require an
alignment convention: one must say which old cells correspond to which new cells.
With persistent node identities in a balanced search tree keyed by position, the same
monoid labels can be maintained with logarithmic amortised structural change, but the
proper definition of stability must count both relabelled cells and rebalanced cells.
This is a useful extension, but it is a different model from the fixed--length one used
in the theorems above.

\section{Context--free outlook}
\label{sec:cfl}

The regular case works because recognition factors through a finite associative
product.  For context--free languages, the analogous object is a derivation tree rather
than a monoid product.  A plausible exact annotation would label a balanced parse
decomposition by nonterminals and local production checks.  The corresponding stability
question is then: after one terminal substitution, how much of a balanced derivation
must be relabelled before all local production rules hold again?  This connects the
present model to dynamic parsing, but the controlling invariant is grammatical rather
than monoidal.

The ancillary CFG chart pipeline is the first finite version of this question.  It
does not attempt to solve context-free certification.  Instead, for a fixed grammar it
materialises the exact interval chart
\[
  C_w[i,j]\subseteq N
\]
of nonterminals deriving the factor $w_i\cdots w_{j-1}$, together with a canonical
first split witnessing the accepting nonterminal when such a split exists.  The local
production rules are the usual CKY implications: leaf cells are forced by terminal
rules, and an interval cell is justified by a binary production and a split into two
smaller intervals.  Under a terminal substitution the experiment measures two finite
movements: nonterminal-set relabeling and accepting-split witness movement.

This chart is larger than a single parse tree.  It avoids depending on
an arbitrary parser's choice of one derivation, and it gives every substring a
stable cell identity in the fixed-length substitution model.  The split-witness column
in Table~5 measures derivation choice: it records how often the canonical local
decomposition of an interval changes even when the generated nonterminal set may stay
the same.  For insertions and deletions, one must add the aligned cell metric from
Definition~\ref{def:alignededit}, because interval identities and balanced parse
tree nodes no longer have the same domain before and after the edit.

A pushdown-run formulation is possible but less stable.  The annotation would record
input position, stack action, and control transition.  One terminal substitution can
alter a long suffix of a naive accepting computation, so the immediate analogue of the
regular prefix run inherits the wrong geometry.  Balanced derivations or interval
charts are therefore more appropriate finite objects.  The next context-free step is
a finite feasibility encoding whose variables choose
local production predicates, canonical nonterminal labels, and aligned split witnesses
under a bounded stability budget, mirroring Lemma~\ref{lem:finite} for grammars rather
than monoids.

\section{Scope of the results}
\label{sec:scope}

The proved results are Lemma~\ref{lem:quotientannotation}, Lemma~\ref{lem:cellbit},
Lemma~\ref{lem:syntacticvisible}, Theorem~\ref{thm:universal}, Proposition~\ref{prop:meanmovement},
Proposition~\ref{prop:dary}, Proposition~\ref{prop:grouplower}, Corollary~\ref{prop:productlower},
Lemma~\ref{lem:prefix}, Theorem~\ref{thm:separation}, Proposition~\ref{prop:slt},
Proposition~\ref{prop:gap}, Proposition~\ref{prop:dynfodictionary},
Proposition~\ref{prop:twosidedclosure},
Lemma~\ref{lem:finite}, and Proposition~\ref{prop:alignedproduct}.
Problem~\ref{prob:boundary} and Problem~\ref{conj:dynfo} are not used as assumptions
in any proof.  The paper makes no cryptographic claim and no sublinear exact--checking
claim.  Its main resource is local deterministic revalidation under a fixed input
length; Proposition~\ref{prop:alignedproduct} is a separate aligned edit-trace
extension.  The Lean branch checks a small formal kernel of the parity product
argument; it does not formalise every theorem in the paper.  The SAT and GPU branches
are finite, reproducible sources of test instances and obstruction cores, not
asymptotic lower bounds for arbitrary annotation schemes.

Two modelling choices are worth stressing.  First, the update operation is a
substitution, so the set of input positions is unchanged.  This lets us count changed
annotation cells without introducing an alignment metric.  Second, the verifier is
external to the procedure that constructs the annotation.  A trusted dynamic algorithm
may keep information that an external verifier cannot locally force; the parity example
is designed precisely to show that these two notions should not be conflated.

\section{Open problems and technical obstacles}
\label{sec:future}

Let $\mathcal C_{\mathrm{const}}$ be the class of regular languages admitting a
perfectly sound bounded--arity scheme with $\edit(n)=\bigO(1)$ and
$\rev(n)=\bigO(1)$, and let $\mathcal C_{\log}$ be the analogous class with
logarithmic bounds.  The results of the paper locate the problem as
\[
  \St\subseteq \mathcal C_{\mathrm{const}}
  \subseteq \mathrm{Reg}\subseteq \mathcal C_{\log}.
\]
The left inclusion is Proposition~\ref{prop:slt}; the right inclusion is
Theorem~\ref{thm:universal}.  The missing information is not whether regular languages
have local exact explanations: they do, with logarithmic movement.  The missing
information is whether local soundness itself forces long-range transport for some
regular languages, even when the annotation is allowed to abandon the product tree.

\paragraph{The constant-stability boundary.}
The main open question is Problem~\ref{prob:boundary}: is
$\mathcal C_{\mathrm{const}}$ exactly $\St$, and if not, what is the first
non-strict-local language it contains?  Before this question can be converted into an
algebraic variety conjecture one must prove closure of the relevant resource class
under the relevant language operations.  Proposition~\ref{prop:twosidedclosure} gives
Boolean closure, word-quotient closure, finite product closure at the level of
decision schemes, uniform inverse-morphism closure in the substitution model, and
arbitrary fixed inverse-morphism closure in the aligned edit-trace model.  It does not
close the original one-sided existential class: nonmembers may have no accepted
annotations, and the original fixed-length metric still cannot see the insertions and
deletions created by a non-uniform morphism.  The positive side is understood only at
the strict-local endpoint.  A strictly local
language is checked by bounded windows, so no informative annotation is needed.  The
first unresolved aperiodic examples are locally testable but not strictly local, for
instance languages asserting that some fixed factor occurs.  Such a language has a
finite aperiodic syntactic monoid, so the group-cancellation argument cannot apply.
However, a naive local certificate of occurrence uses a marked witness, and when the
unique occurrence is destroyed the mark may have to move a long distance or be replaced
by a distributed absence certificate.  Thus the obstruction, if one exists, must be a
combinatorial obstruction to locally forced marking rather than a group obstruction.

\paragraph{From product lower bounds to arbitrary annotations.}
Proposition~\ref{prop:grouplower} proves the logarithmic lower bound in the setting where
the algebra gives cancellation: every ancestor label is a projected product in a group.
This is why the proof stops at quotient-exposing product decompositions.  In an
arbitrary bounded--arity scheme, a cell need not have a distinguished meaning as a
prefix, interval, or quotient value; several cells may jointly encode the relevant
monoid information, and multiple accepted annotations may exist for the same word.  A
general lower bound for parity would therefore need an extraction theorem: from any
perfectly sound constant-stability checker, derive a stable local object that still
transports the $\mathbb Z/2\mathbb Z$ value across substitutions.  Existing trusted
dynamic-membership classifications explain when a maintained auxiliary state can be
updated quickly, but they do not give such an externally checkable extraction.  This
extraction problem is the main mathematical obstacle.

\paragraph{Trusted dynamic maintenance.}
The constant-profile class should also be compared with trusted dynamic-maintenance
classes for regular languages, but only after the external-checkability requirement is
kept explicit.  A trusted update program may maintain auxiliary relations that have no
short local certificate after the update.  Conversely, a constant-profile annotation
gives a locally certified maintained representation, but the paper does not prove that
every trusted dynamic program can be converted into one.

\begin{problem}[External certification versus trusted update]
\label{conj:dynfo}
Compare the regular languages admitting bounded $\loc$, bounded $\edit$, and bounded
$\rev$ with the regular-language classes maintainable by quantifier-free or
first-order dynamic updates.  Does either containment hold?  Any strict-separation
claim must account for the fact that trusted dynamic state need not be externally
certifiable, while a canonical local section need not be generated by a first-order
update program.
\end{problem}

The following elementary implication is the positive part currently available.

\begin{proposition}[A definable selector gives a trusted dynamic program]
\label{prop:dynfodictionary}
Suppose a two-sided total annotation scheme has a uniform first-order presentation of
its cell universe, label predicates, local-rule incidence relation, and canonical
update selector: after a substitution, first-order formulas over the old word, the old
annotation, the edited position, and the new letter identify exactly the cells whose
labels change and their new labels.  Then the recognised language is maintainable in
DynFO.  If the presentation and selector are quantifier-free, the maintenance program
is quantifier-free.
\end{proposition}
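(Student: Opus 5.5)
The plan is to build the DynFO program directly from the given presentation. Its auxiliary relations will be the first-order encoding of the current canonical annotation $\gamma_w$, plus one distinguished auxiliary bit for membership. Concretely, the cell universe $I(n)$ is presented uniformly as a first-order interpretable set over the universe $[n]$ (tuples of bounded dimension, as the hypothesis provides). For each $\lambda\in\Lambda$ we keep a relation $R_\lambda\subseteq I(n)$ holding exactly at the cells labelled $\lambda$. We also keep a $0$-ary relation $\mathrm{Acc}$ storing the claimed output bit $b=\mathbf 1_L(w)$. Since $\Lambda$ is fixed, this is a finite vocabulary. For initialisation we use the standard DynFO convention: start from a fixed initial word, for example a constant word, with its canonical annotation given by precomputed initial relations. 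If the convention demands a first-order definable initial state, we say that it is taken to be one.

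The update step is the content of the hypothesis. On a substitution $\mathrm{sub}_i^a$, the selector supplies first-order formulas $\chi(x)$ and $\psi_\lambda(x)$ over the old word, the old relations $R_\lambda$, $i$, and $a$. The formula $\chi$ defines the cells whose labels change, and $\psi_\lambda$ says that the new label is $\lambda$. The new relations are then
\[
  R'_\lambda(x)\;:\Leftrightarrow\;(\neg\chi(x)\wedge R_\lambda(x))\vee(\chi(x)\wedge\psi_\lambda(x)),
\]
which is first-order. It is quantifier-free whenever $\chi$ and $\psi_\lambda$ are. The word relations update trivially: $P'_c(j)\Leftrightarrow (j\ne i\wedge P_c(j))\vee(j=i\wedge c=a)$. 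By induction on the edit sequence, the auxiliary relations always encode $\gamma_w$, the canonical section at the current word.

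It remains to read off membership. Because the scheme is two-sided total (Definition~\ref{def:twosided}), the canonical annotation contains the claimed bit $b$ as a constant-size root value. The local and root rules accept $(w,\gamma_w,b)$ only when $b=\mathbf 1_L(w)$. So membership is the query ``the root bit cell carries $1$'', which is a first-order formula over the $R_\lambda$ using the presented root scope; equivalently, $\mathrm{Acc}$ can be updated by the same selector. Correctness of this step uses exactly that the canonical section picks an accepted annotation for the unique correct bit. This is also why the one-sided model is excluded: without it, a nonmember's annotation would certify nothing.

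The main obstacle I expect is bookkeeping rather than mathematics. The cell universe $I(n)$ and the rule incidences have to live inside the DynFO structure over domain $[n]$. I would handle this by insisting that the ``uniform first-order presentation'' means a bounded-dimensional first-order interpretation, and by noting that the local-rule incidence is not needed for the program itself; it enters only to justify that the selector's output is again accepted. For the quantifier-free clause, I would check that the interpretation of $I(n)$ is itself quantifier-free, so that no quantifiers are introduced when cells are coded as tuples.
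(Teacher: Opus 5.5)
Your proposal is correct and follows essentially the same route as the paper: one auxiliary relation per label value over the first-order presented cell universe, plus the claimed output bit, each updated by the selector formulas, with exactness of the two-sided scheme ensuring that the maintained bit is the membership value and quantifier-freeness carrying over unchanged. The only difference is that the paper obtains the new output bit by evaluating the two-sided root predicate on the updated root scope, whereas you store the bit as a root cell that the selector updates; your explicit treatment of the cell interpretation and of the initialisation convention is more careful than the paper's one-paragraph argument.
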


\begin{proof}
Store one auxiliary relation for each label value and each finite cell sort in the
presentation, together with the claimed output bit.  The selector formulas update the
label relations after the edited input symbol changes.  The root predicate of the
two-sided scheme updates the output bit, and exactness of the scheme guarantees that
this bit is the membership value.  When the selector formulas are quantifier-free, no
quantifiers are introduced in these auxiliary-relation updates.
\end{proof}

This proposition should not be read backwards.  DynFO supplies trusted update formulas
for some auxiliary relations, but it need not supply a bounded-arity local checker for
those relations; a local checker supplies external soundness, but it need not supply a
first-order definable way to choose the next accepted representative.

\paragraph{Finite obstruction search as proof discovery.}
Lemma~\ref{lem:finite} gives a finite CSP for every fixed template, label alphabet,
stability budget, and length bound.  The CP-SAT path-scope certificates are the first
implementation of the full variable pattern $X_{w,j,\lambda}$ with arbitrary local
predicate truth tables and exact nonmember soundness.  They already produce small
unsatisfiable cores for parity and for the locally-testable language ``contains $11$''.
The CUDA deterministic-template certificates are a separate, more scalable but more
restricted family: they fix prefix-path scopes and deterministic transition rules, then
exhaust $4{,}194{,}304$ four-state templates up to length $13$.  The next technical step
is to move beyond the current path and width-$2$ ladder families: enumerate
bounded-degree scope hypergraphs up to isomorphism, add symmetry reduction from
available position symmetries and quotient actions of the recognising monoid, and
extract smaller cores.  The hard part remains soundness: for every nonmember word, all
possible annotations must be rejected.  That universal quantification is what turns the
finite search from a template-fitting task into a proof-complexity problem.

\paragraph{Additional resources inside the fixed-length model.}
The present stability metric counts changed cells.  This is appropriate for labelled
local data structures, but it hides two secondary costs that matter for lower bounds.
The main model fixes the alphabet, so Lemma~\ref{lem:cellbit} makes cell and bit
movement equivalent up to a constant.  In a variable-alphabet extension, however, a
scheme may pack several algebraic values into one cell; then the bit cost in
$\size(n)$ and $\bedit(n)$ can diverge from the cell cost in $\edit(n)$.  A second
issue remains even in the fixed-alphabet model: a small number of changed labels need
not imply that the updated canonical annotation can be found locally.  A more detailed theory
should relate changed cells, changed bits, affected rules, and search time for finding
the next accepted representative.  Product trees make all four costs logarithmic at
once; a non-product scheme that improves one of them may worsen another.

\paragraph{Aligned insertions and deletions.}
Insertions and deletions are not just longer substitutions: the annotation cell sets
have different domains before and after the edit.  The correct object is therefore a
metric on aligned annotations.

\begin{definition}[Aligned string-edit movement]
\label{def:alignededit}
An aligned edit edge from $w$ to $w'$ is a Levenshtein edit together with the monotone
partial bijection between the surviving input positions.  If annotations live on cell
sets $I$ and $I'$, an annotation alignment is a partial bijection
$\alpha:I\rightharpoonup I'$.  Its movement cost is
\[
  |I\setminus\mathrm{dom}(\alpha)|+
  |I'\setminus\mathrm{im}(\alpha)|+
  |\{i\in\mathrm{dom}(\alpha):\gamma(i)\ne\gamma'(\alpha(i))\}|.
\]
The aligned edit stability of a construction is the maximum such cost over one
substitution, insertion, or deletion edge, using the alignment supplied by the edit
trace.
\end{definition}

\begin{proposition}[Balanced products under aligned string edits]
\label{prop:alignedproduct}
In the aligned edit-trace model, every regular language has an exact balanced-product
annotation with linear size, constant locality, and $\bigO(\log n)$ aligned movement
and revalidation under a single substitution, insertion, or deletion.
\end{proposition}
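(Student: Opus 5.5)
The plan is to replace the fixed balanced tree $T_n$ by a height-balanced search tree keyed by position, and to choose the tree so that every edit trace has a cheap cell alignment. Concretely, I would use a weight-balanced or AVL-style ordered binary tree with leaves $1,\dots,n$. Each node carries the ordered monoid product of its interval, exactly as in Definition~\ref{def:tree}. Correctness, constant locality ($\loc\le3$), linear size and constant-time root access then follow verbatim from the proof of Theorem~\ref{thm:universal}. That proof uses only that the local rules are leaf rules plus binary product rules on an ordered full binary tree whose leaf order is the word order, and never that the tree is the specific nearly-equal-split tree. The canonical section for a word of length $n$ is fixed by a canonical shape, e.g.\ the one produced by a deterministic insertion/deletion sequence. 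Better: the alignment is supplied by the edit trace, so I only need that from the canonical shape at length $n$ some shape reachable by $\bigO(\log n)$ local restructurings is accepted at length $n\pm1$.

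The key steps, in order, are as follows.

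\emph{Substitutions.} The cell sets coincide and the identity alignment applies. Theorem~\ref{thm:universal} gives $\bigO(\log n)$ movement and revalidation directly.

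\emph{Insertions.} For an insertion at position $i$, align every old node with itself. Create one new leaf and one new internal node splitting the old leaf at $i$; these two unmatched cells contribute $2$ to the movement cost of Definition~\ref{def:alignededit}. Then rebalance along the root path. In an AVL or red--black tree, an insertion triggers $\bigO(1)$ rotations, which is $\bigO(\log n)$ even in the crude worst case for other balanced schemes. Each rotation changes the interval, hence the label, of $\bigO(1)$ nodes, and these are aligned to their new positions. Relabelling is confined to the ancestors of the new leaf together with the nodes touched by rotations. The label count is therefore $\bigO(\log n)$, and revalidation involves only the product rules incident with those nodes.

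\emph{Deletions.} Deletions are symmetric: remove one leaf and its parent (two unmatched cells), splice in the sibling, and rebalance with $\bigO(\log n)$ rotations.

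\emph{Aggregation.} Summing unmatched cells and relabelled aligned cells gives $\bigO(\log n)$. Revalidation counts the rules whose scope meets a changed or restructured node, which is again $\bigO(\log n)$ by bounded arity.

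The main obstacle is making the \emph{canonical} section well defined, because the balanced tree shape is history-dependent. If the canonical annotation must be a function of the word alone, the alignment may have to compare two unrelated balanced shapes. My first attempt is to make the shape depend only on $n$, via a deterministic recursive split rule such as left subtree size equal to the largest power of two below $n$. I would then show that going from $n$ to $n+1$ changes only $\bigO(\log n)$ nodes up to alignment. For the power-of-two rule, the shapes for $n$ and $n+1$ differ only along the rightmost spine.

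This creates a further problem. An insertion at position $i$ shifts the leaves to the right of $i$, so the alignment supplied by the edit trace must be combined with the tree alignment. Nodes whose intervals lie entirely to the right of $i$ keep their labels under a shift-by-one alignment only if the tree shape is shift-invariant there, which a fixed shape is not. If this fails, I fall back on the persistent-identity convention of Section~\ref{sec:indel}. There the canonical section is defined relative to the maintained tree, which the statement's phrase ``using the alignment supplied by the edit trace'' permits. The rotation-counting argument above then completes the proof.
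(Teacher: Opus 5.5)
Your final argument is essentially the paper's proof. The paper stores monoid products in a deterministic weight-balanced binary rope maintained in edit-trace order, keeps the cell identity and label of every off-path subtree, and counts the $\bigO(\log n)$ created, deleted, rotated or relabelled nodes and their incident product rules, exactly as you do once you fall back on the maintained, history-dependent tree. You were right to abandon the shape-depends-only-on-$n$ attempt because of the shift problem; the paper also relies on the persistent tree and remarks afterwards that the word-determined (plain-string) canonical section is left unsettled.
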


\begin{proof}
Maintain the leaves in their edit-trace order and store monoid products in a
deterministic weight-balanced binary rope.  A substitution changes one leaf label and
therefore only the labels on its root path.  An insertion splices in one new leaf, and
a deletion removes one leaf.  Standard weight-balanced rebalancing performs only
$\bigO(\log n)$ rotations and touches only $\bigO(\log n)$ nodes on the search and
rotation paths.  All off-path subtrees keep both their cell identity and their monoid
label.  Each touched internal node is still checked by the same constant-arity product
rule, so the number of changed, created, or deleted cells and the number of local rules
requiring revalidation are both logarithmic.
\end{proof}

This proposition is an edit-trace statement.  It does not settle the
plain-string question in which two equal symbols have no persistent identity and a
canonical representative must choose an alignment without being handed one by the
edit.  Non-uniform inverse morphisms fall into exactly this aligned-edit gap: changing
one input letter can replace a bounded block by a block of different length.

\paragraph{Context-free analogues.}
For context-free languages, the monoid product is replaced by a derivation or stack
object.  The chart certificate in Table~5 is the first bounded experiment in this
direction: it measures interval nonterminal relabeling and canonical split-witness
movement for fixed grammars.  The analogue of the syntactic monoid would have to turn
that finite measurement into a grammar-invariant measure of how local production
certificates change after a terminal edit.  The likely obstruction is no longer group
transport, but the instability of parse choices: a single terminal substitution can
force a different derivation even when membership itself remains true.



\begin{thebibliography}{99}

\bibitem{AJP21}
A.~Amarilli, L.~Jachiet, and C.~Paperman.
\newblock Dynamic membership for regular languages.
\newblock In \emph{48th International Colloquium on Automata, Languages, and
Programming (ICALP)}, LIPIcs 198, 116:1--116:17, 2021.

\bibitem{KKP10}
A.~Korman, S.~Kutten, and D.~Peleg.
\newblock Proof labeling schemes.
\newblock \emph{Distributed Computing}, 22(4):215--233, 2010.

\bibitem{GS16}
M.~G\"o\"os and J.~Suomela.
\newblock Locally checkable proofs in distributed computing.
\newblock \emph{Theory of Computing}, 12(19):1--33, 2016.

\bibitem{FF17}
L.~Feuilloley and P.~Fraigniaud.
\newblock Error-sensitive proof-labeling schemes.
\newblock In \emph{31st International Symposium on Distributed Computing (DISC)},
LIPIcs 91, 16:1--16:16, 2017.

\bibitem{Merkle87}
R.~C.~Merkle.
\newblock A digital signature based on a conventional encryption function.
\newblock In \emph{Advances in Cryptology (CRYPTO)}, LNCS 293, 369--378, 1987.

\bibitem{LadnerFischer80}
R.~E.~Ladner and M.~J.~Fischer.
\newblock Parallel prefix computation.
\newblock \emph{Journal of the ACM}, 27(4):831--838, 1980.

\bibitem{McNaughtonPapert71}
R.~McNaughton and S.~Papert.
\newblock \emph{Counter-Free Automata}.
\newblock MIT Press, 1971.

\bibitem{BrzozowskiSimon73}
J.~A.~Brzozowski and I.~Simon.
\newblock Characterizations of locally testable events.
\newblock \emph{Discrete Mathematics}, 4(3):243--271, 1973.

\bibitem{Zalcstein72}
Y.~Zalcstein.
\newblock Locally testable languages.
\newblock \emph{Journal of Computer and System Sciences}, 6(2):151--167, 1972.

\bibitem{Schutzenberger65}
M.~P.~Sch\"utzenberger.
\newblock On finite monoids having only trivial subgroups.
\newblock \emph{Information and Control}, 8(2):190--194, 1965.

\bibitem{Pin86}
J.-\'E.~Pin.
\newblock \emph{Varieties of Formal Languages}.
\newblock Plenum Press, 1986.

\bibitem{Eilenberg76}
S.~Eilenberg.
\newblock \emph{Automata, Languages, and Machines}, Volume B.
\newblock Academic Press, 1976.


\bibitem{PatnaikImmerman94}
S.~Patnaik and N.~Immerman.
\newblock Dyn-FO: A parallel, dynamic complexity class.
\newblock \emph{Journal of Computer and System Sciences}, 55(2):199--209, 1997.

\bibitem{GMS12}
W.~Gelade, M.~Marquardt, and T.~Schwentick.
\newblock The dynamic complexity of formal languages.
\newblock \emph{ACM Transactions on Computational Logic}, 13(3):1--36, 2012.

\bibitem{Tschirbs23}
F.~Tschirbs.
\newblock Dynamic complexity of regular languages.
\newblock In \emph{31st EACSL Annual Conference on Computer Science Logic (CSL)},
LIPIcs 252, 35:1--35:20, 2023.

\bibitem{BTVZ26}
C.~Barloy, F.~Tschirbs, N.~Vortmeier, and T.~Zeume.
\newblock Algebraic characterizations of classes of regular languages in DynFO.
\newblock In \emph{43rd International Symposium on Theoretical Aspects of
Computer Science (STACS)}, LIPIcs 364, 9:1--9:19, 2026.

\bibitem{deMouraUllrich21}
L.~de Moura and S.~Ullrich.
\newblock The Lean 4 theorem prover and programming language.
\newblock In \emph{Automated Deduction -- CADE 28}, LNCS 12699, 625--635,
2021.

\bibitem{Cascarano10}
N.~Cascarano, P.~Rolando, F.~Risso, and R.~Sisto.
\newblock iNFAnt: NFA pattern matching on GPGPU devices.
\newblock \emph{ACM SIGCOMM Computer Communication Review}, 40(5):20--26,
2010.

\bibitem{YuBecchi13}
X.~Yu and M.~Becchi.
\newblock Exploring different automata representations for efficient regular
expression matching on GPUs.
\newblock In \emph{18th ACM SIGPLAN Symposium on Principles and Practice of
Parallel Programming (PPoPP)}, 2013.

\bibitem{Mytkowicz14}
T.~Mytkowicz, M.~Musuvathi, and W.~Schulte.
\newblock Data-parallel finite-state machines.
\newblock In \emph{19th International Conference on Architectural Support for
Programming Languages and Operating Systems (ASPLOS)}, 529--542, 2014.

\bibitem{LiuPaiJog20}
H.~Liu, S.~Pai, and A.~Jog.
\newblock Why GPUs are slow at executing NFAs and how to make them faster.
\newblock In \emph{25th International Conference on Architectural Support for
Programming Languages and Operating Systems (ASPLOS)}, 2020.

\bibitem{GeZhangLiu24}
T.~Ge, T.~Zhang, and H.~Liu.
\newblock ngAP: Non-blocking large-scale automata processing on GPUs.
\newblock In \emph{29th ACM International Conference on Architectural Support
for Programming Languages and Operating Systems (ASPLOS)}, 2024.

\bibitem{ValizadehBerger23}
M.~Valizadeh and M.~Berger.
\newblock Search-based regular expression inference on a GPU.
\newblock In \emph{44th ACM SIGPLAN Conference on Programming Language Design
and Implementation (PLDI)}, 2023.

\bibitem{HybridSA24}
A.~Le Glaunec, L.~Kong, and K.~Mamouras.
\newblock HybridSA: GPU acceleration of multi-pattern regex matching using bit
parallelism.
\newblock \emph{Proceedings of the ACM on Programming Languages},
8(OOPSLA2):1699--1728, 2024.

\end{thebibliography}
\end{document}